\documentclass[runningheads,orivec]{llncs}

\usepackage{amsmath}
\usepackage{amssymb}
\usepackage{amsthm}
\usepackage{booktabs}
\usepackage[anycase]{thmtools}

\usepackage{cite}

\usepackage{array}
\usepackage{multirow}

\usepackage{algorithm}
\usepackage[noend]{algpseudocode}

\usepackage{tikz}
\usetikzlibrary{arrows,automata,calc,patterns}
\tikzset{every picture/.append style={->, >=stealth', shorten >=0pt, auto, initial text={}, initial above}}
\tikzstyle{Takenode}=[circle,draw,inner sep=0pt,minimum size=25pt,pattern color={red!25!white},pattern={north east lines}]
\tikzstyle{Notakenode}=[circle,draw,inner sep=0pt,minimum size=25pt,pattern color={blue!25!white},pattern={crosshatch dots}]
\tikzstyle{fixedsize}=[text width=1.2cm, minimum height=1cm, text=black, scale=1, align=center]

\usepackage{hyperref}

\usepackage{orcidlink}
\renewcommand{\orcidID}[1]{\orcidlink{#1}}
\newcommand{\Letter}{{\fontfamily{mvs}\fontencoding{U}\fontseries{m}\fontshape{n}\selectfont\char66}}

\newcommand{\functionName}[1]{\textsc{#1}}

\renewcommand{\epsilon}{\varepsilon}
\renewcommand{\phi}{\varphi}

\newcommand{\naturals}{\mathbb{N}}

\newcommand{\reals}{\mathbb{R}}
\newcommand{\posreals}{\reals_{\geq 0}}

\newcommand{\indvec}[1]{\mathbf{1}_{#1}}
\newcommand{\transpose}[1]{#1^{\intercal}}

\newcommand{\abs}[1]{\lvert #1 \rvert}

\newcommand{\setnocond}[1]{\{#1\}}
\newcommand{\setcond}[2]{\{\, #1 \mid #2 \, \}}

\newcommand{\mystackrelsingle}[2]{%
  \mathrel{\vbox{\offinterlineskip\ialign{%
    \hfil##\hfil\cr%
    $\scriptstyle#1$\cr%
    \noalign{\kern.3ex}%
    $#2$\cr%
}}}}

\newcommand{\modelsymbol}[1]{\mathcal{#1}}
\newcommand{\mc}[1][D]{\modelsymbol{#1}}

\newcommand{\mstates}{S}
\newcommand{\minit}[1][s]{\bar{#1}}

\newcommand{\mtransitions}{\mathbf{P}}

\newcommand{\initdistribution}[1][]{\minit[\iota]#1}

\newcommand{\mpath}{\sigma}

\newcommand{\evaluation}{\nu}

\newcommand{\MooreInput}{\Sigma}
\newcommand{\MooreOutput}{\Gamma}
\newcommand{\MooreTransition}{T}
\newcommand{\Outfunc}{O}
\newcommand{\kbit}{k-bit }
\newcommand{\taken}{\mathrm{T}}
\newcommand{\notaken}{\mathrm{NT}}
\newcommand{\strongtaken}{\mathrm{ST}}
\newcommand{\weaktaken}{\mathrm{WT}}
\newcommand{\strongnotaken}{\mathrm{SN}}
\newcommand{\weaknotaken}{\mathrm{WN}}

\newcommand{\pnbit}{probabilistic k-bit }

\newcommand{\attackout}[1][v]{\mathsf{out}_{#1}}
\newcommand{\ranvare}[1]{z_{#1}}
\newcommand{\ranvars}[1]{\mathcal{Z}_{#1}}
\newcommand{\safepstaken}{\strongtaken}
\newcommand{\safepsnotaken}{\strongnotaken}

\newcommand{\privacypnbit}{enhanced k-bit probabilistic}
\newcommand{\threshold}{m}
\newcommand{\privacy}{p}
\newcommand{\execute}{\mathsf{exec}}
\newcommand{\pstationary}{\mu}
\newcommand{\tranmatrix}{\mathrm{M}}
\newcommand{\substituteg}{g}

\begin{document}

\title{Synthesizing Probabilistic Saturating Counters with Differentially Private Formal Guarantees}

\author{
     Zhiming Chi\inst{1,2,3}\orcidID{0009-0003-1110-5203}
     \and
     Lutan Zhao\inst{3,4}\orcidID{0000-0003-3672-6463}
     \and
     Depeng Liu\inst{1,2}\orcidID{0000-0002-9353-9691}
     \and
     Yong Li\inst{1,2}\orcidID{0000-0002-7301-9234}
     \and
     \\
     Pengfei Yang\inst{5}\orcidID{0000-0003-4114-7757}
     \and
     Bow-Yaw Wang\inst{6}\orcidID{0000-0002-5757-545X}
     \and
     Rui Hou\inst{3,4}\orcidID{0000-0002-9215-7632}
     \and
     Cheng-Chao Huang\inst{7}\orcidID{0000-0002-9693-8778}
     \and
     \\
     Andrea Turrini\inst{1,2}\orcidID{0000-0003-4343-9323}
     \and
     Lijun Zhang\inst{1,2,3}\textsuperscript{(\Letter)}\orcidID{0000-0002-3692-2088}
    \and
    Naijun Zhan\inst{8}\orcidID{0000-0003-3298-3817}
 }
\institute{
    Key Laboratory of System Software (Chinese Academy of Sciences), Beijing, China
    \and
    Institute of Software, Chinese Academy of Sciences, Beijing, China
    \and
    University of Chinese Academy of Sciences, Beijing, China
    \and
    Institute of Information Engineering, Chinese Academy of Sciences, Beijing, China
    \and
    College of Computer and Information Science, Software College, Southwest University, Chongqing, China
    \and
    Institute of Information Science, Academia Sinica, Taipei, Taiwan
    \and
    Nanjing Institute of Software Technology, Chinese Academy of Sciences, Nanjing, China
    \and
    School of Computer Science \& MOE Key Laboratory of High Confidence Software Technologies, Peking University, Beijing, China 
    \\
    \email{zhanglj@ios.ac.cn}
}

\maketitle

\setcounter{footnote}{0}

\pagestyle{plain}

\begin{abstract}
Branch predictors improve instruction-level parallelism in modern processors and are commonly modeled using saturating counters.
However, classical saturating counters are deterministic and thus vulnerable to side-channel attacks: an attacker can manipulate the counter state and infer the branch direction of a victim process.
Probabilistic saturating counters (PSCs) have been proposed to mitigate this leakage by randomizing counter updates, but existing evaluations are mainly empirical.
In this paper, we give a formal analysis based on differential privacy (DP):
we model PSCs and the corresponding Prime+Probe attack strategies as probabilistic Moore machines, derive optimal attack strategies, and quantify the attacker's distinguishing power through DP.
Our DP guarantee applies to the PSC primitive under the Prime+Probe observation model; end-to-end security for a full branch predictor under repeated or adaptive attacks is an important direction for future work.
We then synthesize parameters for an enhanced PSC that satisfies a target pure DP guarantee.
To evaluate utility, we derive the stationary misprediction rate and validate the theoretical predictions on benchmark programs.
Compared to deterministic and existing probabilistic saturating counters, the synthesized PSCs provide formal security guarantees while preserving competitive prediction performance.
\end{abstract}

\section{Introduction}

Branch prediction is a fundamental technique for enhancing instruction-level parallelism in modern high-performance processors~\cite{SamsungExynosISCA2020,IBMPower8,AMDZen2Paper}. 
However, the shared nature of the predictor's resources, whether in single-core or Simultaneous Multi-Threading processors, has exposed significant security vulnerabilities~\cite{evtyushkin2018branchscope,lee2017inferring,huo2020bluethunder}. 
The root cause lies in the deterministic update strategy of the Saturating Counter (SC)--the fundamental building block of predictors ranging from GShare to TAGE~\cite{mcfarling1993combining,seznec_TAGE_SC_L}. 
This determinism allows attackers to manipulate the counter state and exploit side-channels, such as the Prime+Probe attack~\cite{JCST2021zhao,evtyushkin2018branchscope}, to infer fine-grained execution traces and thus extract information about the victim's private data. 
While Probabilistic Saturating Counters (PSCs)~\cite{JCST2021zhao} have been proposed to mitigate these attacks by introducing randomized transitions, existing security analyses rely on incomplete simulations, failing to rigorously balance the trade-off between security and performance.

In this paper, we address this gap by applying formal verification to the design and analysis of PSCs, specifically targeting the Prime+Probe attack. 
In this attack scenario, an adversary first primes the target saturating counter to a known initial state and subsequently probes the counter to detect the cut-off point--the first instance of a correct prediction during the victim's execution. 
By observing this cut-off point, the adversary can infer the victim's branch direction; 
repeated attacks allow the attacker to infer the private data used in the branch decision. 
We model the SC and this adversarial interaction using probabilistic Moore machines~\cite{rabin1963probabilistic} and quantify security through Differential Privacy (DP)~\cite{DBLP:conf/icalp/Dwork06}. 
DP is a natural fit for this setting because it directly bounds the information leakage from observable outputs and provides a compositional worst-case guarantee~\cite{WassermanZhou2010,KairouzOhViswanath2015,DR:14:AFDP}.
By deriving an optimal attack strategy, we show that current PSC designs fail to provide sufficient DP guarantees under worst-case conditions, rendering the attack always successful.

To overcome these limitations, we propose a novel PSC design framework with additional probabilistic transitions.
Given a target privacy budget $\epsilon > 0$ (smaller $\epsilon$ means stronger privacy), our \emph{parameter synthesis problem} is: find the defense parameter $p \in (0,1)$ and update probability $m \in (0,1]$ such that the enhanced PSC satisfies $(\epsilon, 0)$-DP and the stationary misprediction rate is minimized.
We solve this problem analytically (Prop.~\ref{prop:optimalP}), yielding the optimal $p^{*} = \frac{1}{1+e^{\epsilon}}$.
Moreover, we assess performance impact by deriving an analytical solution for the misprediction rate based on the PSC's stationary distribution.
While we focus on the ubiquitous 2-bit counters, our methodology extends to arbitrary \kbit designs. 
Our main contributions are: 
(i) a formal framework for modeling saturating counters and the Prime+Probe attack using probabilistic Moore machines, which reveals inherent privacy vulnerabilities in existing PSC designs through the derivation of the optimal attack strategy;
(ii) a robust PSC design suitable for parameter synthesis, enabling the enforcement of rigorous differential privacy guarantees during the design phase;
(iii) the stationary misprediction rate as a performance metric and its analytical derivation; and
(iv) a formal privacy-utility trade-off analysis, including optimal parameter selection and a comparison with the randomized response mechanism that demonstrates the advantage of our targeted randomization.
Empirical results confirm that our enhanced PSCs achieve a superior balance of provable security and high performance compared to traditional and existing probabilistic counterparts.
Compared to Zhao et al.~\cite{JCST2021zhao}, who introduced PSCs with empirical security evaluation, we provide the \emph{first formal DP analysis} with provable guarantees, identify a fundamental vulnerability in the original PSC (the $c{=}1$ observation), propose an enhanced PSC achieving pure $\epsilon$-DP for any target budget, and derive the stationary misprediction rate in closed form.

\paragraph{Related work.}
Side-channel attacks against branch predictors and shared micro-architectural resources (e.g., SGX, BPUs, and caches) have been extensively studied~\cite{DBLP:conf/uss/BulckWKPS17,DBLP:conf/sp/XuCP15,DBLP:conf/uss/0001SGKKP17,evtyushkin2018branchscope,DBLP:journals/ieeesp/ChenCXZLL20,DBLP:conf/woot/BrasserMDKCS17,DBLP:conf/eurosec/GotzfriedESM17,DBLP:conf/usenix/HahnelCP17,DBLP:conf/dimva/SchwarzWGMM17,DBLP:conf/uss/DisselkoenKPT17,DBLP:conf/dimva/GrussMWM16,DBLP:conf/dac/KayaalpAPJ16,DBLP:journals/cacm/KocherHFGGHHLMP20,DBLP:conf/ccs/MaisuradzeR18}. 
Most relevant to our work, Wang et al.~\cite{DBLP:conf/hpca/WangTXW24} utilized symbolic execution to identify attack patterns on predictors. 
However, their evaluation of probabilistic saturating counters remained largely empirical. 
In contrast, we provide a theoretical framework that formally explains the effectiveness of attacks and explicitly models how attack outputs lead to branch inference, offering a more rigorous security assessment.

Differential privacy~\cite{DBLP:conf/icalp/Dwork06} has been widely adopted across various fields, including economics, healthcare, and deep learning~\cite{DBLP:conf/focs/McSherryT07,DBLP:conf/innovations/NissimST12,DBLP:journals/csur/ZhangZXZY22,DBLP:conf/edbt/DankarE12,DBLP:journals/corr/abs-1910-02578,DBLP:journals/patterns/DydaPCFPRWMHWL21,DBLP:conf/ccs/AbadiCGMMT016,DBLP:journals/corr/abs-1911-11607,DBLP:conf/cloud2/ChengYHZ0LLC19,DBLP:journals/corr/abs-2007-11524,DBLP:conf/nips/GhaziGKMZ21,DBLP:journals/corr/abs-2202-05089,DBLP:conf/aaai/ChengWZCW022}.
Importantly, DP guarantees imply rigorous bounds on an adversary's ability to make inferences: Wasserman and Zhou~\cite{WassermanZhou2010} established a statistical framework connecting DP with hypothesis testing, and Kairouz et al.~\cite{KairouzOhViswanath2015} studied the composition of DP mechanisms and the resulting attack bounds.
From a formal verification perspective, DP can be specified and verified using probabilistic couplings in programming languages~\cite{DBLP:conf/popl/ZhangK17,DBLP:conf/pldi/WangDWKZ19,DBLP:conf/ccs/WangDKZ20}, automated theorem proving~\cite{DBLP:conf/ccs/BartheFGGHS16,DBLP:journals/toplas/BartheKOB13,DBLP:conf/popl/BartheGAHRS15}, and probabilistic model checking~\cite{DBLP:journals/tcs/LiuWFZ23,DBLP:conf/vmcai/LiuWZ22}. 
Our work extends these formal techniques to the hardware security domain, leveraging the connection between DP and inference bounds to provide meaningful security guarantees for PSCs.

\section{Preliminaries}
\label{sec:preliminary}
In this section, we introduce key concepts that will be used later and review side-channel attacks on deterministic saturating counters.
Throughout the paper we fix $\MooreInput = \MooreOutput = \setnocond{\taken, \notaken}$ as alphabets and we just omit them from the definitions.

\subsection{Saturating counters and Moore machines}

\begin{figure}[t]
    \centering
    \resizebox{\linewidth}{!}{
    \begin{tikzpicture}
        \node[Notakenode,fixedsize] (SN) at (0,0) {$\strongnotaken$};
        \node[Notakenode,fixedsize] (WN1) at ($(SN) - (2.5,0)$) {$\weaknotaken_{1}$};
        \node[fixedsize] (DN) at ($(WN1) - (2.5,0)$) {$\cdots$};
        \node[Notakenode,fixedsize] (WNn2) at ($(DN) - (2.5,0)$) {$\weaknotaken_{n-2}$};
        \node[Notakenode,fixedsize] (WNn1) at ($(WNn2) - (2.5,0)$) {$\weaknotaken_{n-1}$};

        \node[Takenode,fixedsize] (WTn1) at ($(SN) - (0,-2)$) {$\weaktaken_{n-1}$};
        \node[Takenode,fixedsize] (WTn2) at ($(WTn1) - (2.5,0)$) {$\weaktaken_{n-2}$};
        \node[fixedsize] (DT) at ($(WTn2) - (2.5,0)$) {$\cdots$};
        \node[Takenode,fixedsize] (WT1) at ($(DT) - (2.5,0)$) {$\weaktaken_{1}$};
        \node[Takenode,fixedsize, initial] (ST) at ($(WT1) - (2.5,0)$) {$\strongtaken$};

        \draw (SN) to[in=-20,out=20,looseness=7] node[right] {$\notaken$} (SN);
        \draw (SN) to[bend left=15] node[below] {$\taken$} (WN1);
        \draw (WN1) to[bend left=15] node[below] {$\taken$} (DN);
        \draw (WN1) to[bend left=15] node[above] {$\notaken$} (SN);
        \draw (DN) to[bend left=15] node[below] {$\taken$} (WNn2);
        \draw (DN) to[bend left=15] node[above] {$\notaken$} (WN1);
        \draw (WNn2) to[bend left=15] node[below] {$\taken$} (WNn1);
        \draw (WNn2) to[bend left=15] node[above] {$\notaken$} (DN);
        \draw (WNn1) to[bend left=15] node[left] {$\taken$} (ST);
        \draw (WNn1) to[bend left=15] node[above] {$\notaken$} (WNn2);
        
        \draw (ST) to[in=160,out=200,looseness=7] node[left] {$\taken$} (ST);
        \draw (ST) to[bend left=15] node[above] {$\notaken$} (WT1);
        \draw (WT1) to[bend left=15] node[above] {$\notaken$} (DT);
        \draw (WT1) to[bend left=15] node[below] {$\taken$} (ST);
        \draw (DT) to[bend left=15] node[above] {$\notaken$} (WTn2);
        \draw (DT) to[bend left=15] node[below] {$\taken$} (WT1);
        \draw (WTn2) to[bend left=15] node[above] {$\notaken$} (WTn1);
        \draw (WTn2) to[bend left=15] node[below] {$\taken$} (DT);
        \draw (WTn1) to[bend left=15] node[right] {$\notaken$} (SN);
        \draw (WTn1) to[bend left=15] node[below] {$\taken$} (WTn2);
    \end{tikzpicture}
    }
    \caption{A \kbit saturating counter, with a total of $2n=2^{k}$ states. Red-striped states output $\taken$, while blue-dotted states output $\notaken$.}
    \label{fig:kbitSaturatingCounter}
\end{figure}

A Saturating Counter (SC) records the branch history and predicts the next branch resolution.
The general structure of a \kbit SC is shown in Fig.~\ref{fig:kbitSaturatingCounter} and has $2^k$ states: 
one Strongly and $2^{k-1} - 1$ Weakly Not-taken blue-dotted states and, symmetrically, $2^{k-1} - 1$ Weakly and one Strongly Taken red-striped states. 
The predictor operates with two input symbols: $\taken$ and $\notaken$, indicating whether a branch is Taken or Not-Taken during a program execution, respectively. 
A \emph{misprediction} occurs when the prediction (e.g., $\taken$ in states $\strongtaken$ or $\weaktaken$) conflicts with the actual execution outcome (e.g., $\notaken$). 
The SC requires $n=2^{k-1}$ consecutive mispredictions to alter for sure its prediction direction due to the hysteresis provided by the strong/weak (S/W) bit, which allows the SC to adjust its state based on the actual execution direction.
An SC can be formally modeled as a Moore machine.
\begin{definition}
\label{def:mooreMachine}
    A \emph{Moore machine} is as a tuple $M = (\mstates, \minit, \MooreTransition, \Outfunc)$, where 
    $\mstates$ is a finite set of states, 
    $\minit \in \mstates$ is the initial state, 
    $\MooreTransition \colon \mstates \times \MooreInput \to \mstates$ is the transition function that maps the current state and the input to the next state, 
    and 
    $\Outfunc \colon \mstates \to \MooreOutput$ is the output function that maps a state to an output symbol.
\end{definition}
The Moore machine corresponding to the \kbit saturating counter shown in Fig.~\ref{fig:kbitSaturatingCounter} has the depicted nodes as states, the initial state is $\strongtaken$, as indicated by the sourceless incoming arrow, the transition function is given by the edges between nodes, and the output function assigns $\taken$ to the red-striped states and $\notaken$ to the blue-dotted states. 
A 2-bit saturating counter has only four states: $\strongtaken$, $\weaktaken$, $\weaknotaken$, and $\strongnotaken$, i.e., the states on the left and right of the counter shown in Fig.~\ref{fig:kbitSaturatingCounter}.

In addition to ordinary Moore machines, we will also use \textit{probabilistic Moore machines} to model branch predictors when probabilities are introduced.
\begin{definition}\label{def:mc}
  A \emph{probabilistic Moore machine} is a tuple $\mc = (\mstates, \initdistribution, \mtransitions, \Outfunc)$, where
  $\mstates$ and $\Outfunc$ are as in Def.~\ref{def:mooreMachine}, 
  $\initdistribution \colon \mstates \to [0,1]$ is the initial distribution such that $\sum_{s \in \mstates} \initdistribution(s) = 1$, 
  and 
  $\mtransitions \colon \mstates \times \MooreInput \times \mstates \to [0, 1]$ is the probability transition function such that $\sum_{t \in \mstates} \mtransitions(s, \sigma, t) = 1$ for each state $s \in \mstates$ and $\sigma \in \MooreInput$.
\end{definition}
Let $P_{s, \sigma}$ be the probability of receiving the input symbol $\sigma \in \MooreInput$ when being in state $s \in \mstates$.
These probabilities are specified externally by the workload or by the phase of the attack, and are not determined by Def.~\ref{def:mc}.
Then we can represent the transition probability function $\mtransitions$ as the matrix $\tranmatrix = (\mtransitions(s, t))_{\mstates \times \mstates}$ where each entry $\tranmatrix(s, t)$ is defined as $\tranmatrix(s, t) = \sum_{\sigma \in \MooreInput} \mtransitions(s, \sigma, t) \cdot P_{s, \sigma}$.
A probability distribution $\mu$ on $\mstates$ is stationary if $\mu  \tranmatrix=\mu$. 


\subsection{Prime+Probe attacks on branch predictors}

\begin{algorithm}[t]
    \begin{algorithmic}[1]
        \Statex \textbf{Input:} Number of branch executions $n$; the victim thread's direction $v \in \{\taken, \notaken\}$
        \Statex \textbf{Output:} The number of probes before reaching the cut-off point
        \Function{FindCutOffPoint}{$n,v$}
            \State execute the attacker branch with direction $\taken$ for $n$ times 
            \label{alg:cutoff:prime}
            \State execute the victim branch with direction $v$
            \label{alg:cutoff:attach}
            \State $c \gets 0$
            \label{alg:cutoff:probeStart}
            \While{$c \leq n$}
                \State execute the attacker branch with direction $\notaken$
                \If{prediction is correct}
                    \Return $c$
                \EndIf
                \State{$c \gets c+1$}
            \EndWhile
            \label{alg:cutoff:probeEnd}
            \State \Return $c$
        \EndFunction
    \end{algorithmic}
    \caption{Algorithm for finding the cut off point}
    \label{alg:cutoff}
\end{algorithm}

Branch predictor side-channel attacks are based on the ability to prime and probe them. 
The predictor encodes the execution history of branch instructions, potentially including sensitive branches; 
an attacker cannot observe the internal state of the predictor, but can observe its output. 
By providing appropriate inputs, the attacker can guide the predictor to a known state and then infer the internal state after the victim has taken its branch direction.
In \cite{JCST2021zhao,evtyushkin2018branchscope}, this is achieved by identifying the \textit{cut-off point} of the saturating counter: 
the first instance of a correct prediction following multiple mispredictions. 
By locating this cut-off point, the attacker can deduce whether the victim thread's execution direction was $\taken$ or $\notaken$.
The cut-off point can be found by means of Alg.~\ref{alg:cutoff} that we can divide into three phases:
\begin{itemize}
\item 
    Phase~1 (line~\ref{alg:cutoff:prime}): 
    the attacker first primes the target saturating counter to a predefined state, by repeatedly issuing $\taken$ to take the branch at least $n$ times; 
    if $n \geq 2^{k-1}$, then state $\strongtaken$ is for sure reached.
\item 
    Phase~2 (line~\ref{alg:cutoff:attach}): 
    the victim thread executes its branch with $v = \taken$ or $v = \notaken$; 
    the counter either remains in $\strongtaken$ or moves to $\weaktaken_{1}$, respectively.
\item 
    Phase~3 (lines~\ref{alg:cutoff:probeStart}--\ref{alg:cutoff:probeEnd}):
    the attacker probes the predictor by counting how many $\notaken$ branch executions are needed to reach $\strongnotaken$, the first state where the prediction agrees with the input $\notaken$.
\end{itemize}
The core of distinguishing the victim's behavior lies in observing differences in prediction outcomes during the probing phase. 
A critical observation is the \textit{cut-off point} in Phase~3, which corresponds to the $\strongnotaken$ state of the predictor. 
Before this point, only mispredictions are observed; afterwards, all predictions are correct. 
By counting the number of mispredictions $c$ before reaching the cut-off point, the attacker can infer the direction of the victim's branch, by comparing $c$ with the known value $2^{k-1}$: 
if $c = 2^{k-1}$, then the victim's direction was $v = \taken$; 
if $c < 2^{k-1}$, then the victim's direction was $v = \notaken$.
Through repeated execution of this algorithm, the attacker can continuously decode the victim's sensitive branch executions, posing significant security risks.

\subsection{Differential privacy} 

To measure how much a branch predictor is resistant to attacks we use differential privacy~\cite{D:06:DP,NRS:07:dp,DR:14:AFDP}, a framework designed for the protection of individual data during publication and analysis.
Given a set of \emph{data entries} $\mathcal{X}$, a \emph{dataset} of size $n$ is an element of $\mathcal{X}^{n}$.
Two datasets $\mathcal{D}, \mathcal{D}' \in \mathcal{X}^{n}$ are \emph{neighbors}, denoted by $\Delta(\mathcal{D}, \mathcal{D}') \leq 1$, if they differ in at most one data entry. 
To ensure individual privacy, random noise is introduced to the dataset so that true values cannot be disclosed or inferred. 
A \emph{data publishing mechanism} $\mathcal{M}$ is a randomized algorithm that takes a dataset $\mathcal{D}$ as input; 
let $\textmd{range}(\mathcal{M})$ denote the set of all possible outputs of $\mathcal{M}$ and assume we are given a probability space over $\textmd{range}(\mathcal{M})$. 
Differential privacy is satisfied by $\mathcal{M}$ if its output distributions over all neighboring datasets are close enough. 
\begin{definition}
\label{def:differentialPrivacy}
    Given $\epsilon, \delta \in \posreals$, a mechanism $\mathcal{M}$ is called \emph{$(\epsilon, \delta)$-differentially private} if for all measurable sets $O \subseteq \textmd{range}(\mathcal{M})$ and all datasets $\mathcal{D}, \mathcal{D}' \in \mathcal{X}^{n}$ such that $\Delta(\mathcal{D}, \mathcal{D}') \leq 1$, it holds that 
    $\Pr(\mathcal{M}(\mathcal{D}) \in O) \leq e^{\epsilon} \cdot \Pr(\mathcal{M}(\mathcal{D}') \in O) + \delta$.
\end{definition}
The parameters $\epsilon$ and $\delta$ control the probability differences between the outputs in any two neighboring input datasets regarding an output set $O$. Evidently, smaller values of these parameters result in smaller discrepancies in the probability distributions, thus offering enhanced privacy protections. 
The special case $\delta = 0$ is known as \emph{pure} $\epsilon$-differential privacy, a strictly stronger guarantee than $(\epsilon,\delta)$-DP with $\delta > 0$~\cite{DR:14:AFDP}.
In subsequent sections, we adapt this definition to safeguard sensitive information of the victim thread within the context of branch predictors.
Concretely, the ``dataset'' is the victim's single branch direction ($v \in \{\taken, \notaken\}$); two ``neighboring datasets'' differ only in this one bit; the ``mechanism'' is the Prime+Probe attack output (the cut-off point $c$); and DP guarantees that the distribution of $c$ is nearly the same regardless of $v$.
This is a \emph{local} security guarantee for the PSC primitive under the Prime+Probe observation model.
We note that alternative quantitative security notions, such as quantitative information flow~\cite{DBLP:conf/fossacs/Smith09}, could also be applied; DP is particularly well-suited here because (1)~it provides a compositional worst-case guarantee for arbitrary post-processing, (2)~it directly bounds the adversary's inference advantage as $\frac{e^{\epsilon}}{1+e^{\epsilon}}$, and (3)~the resulting constraints on parameters $m$ and $p$ admit efficient synthesis.

\section{Security Analysis of Existing PSCs}
\label{sec:existingPSCsecurityAnalysis}

In this section, we first review the existing PSCs presented in~\cite{JCST2021zhao}. 
We then provide two contributions related to these PSCs.
First, we employ probabilistic Moore machines to formally model both the PSC mechanism and its associated attack surface, proposing a rigorous security specification grounded in differential privacy.
Second, by deriving the optimal attack strategy and quantifying the probability of a successful attack, we conduct a comprehensive theoretical analysis that reveals a critical security vulnerability in the existing PSC design.

\subsection{Differential privacy analysis of PSCs}

\begin{figure}[t]
    \centering
    \begin{tikzpicture}
        \path[use as bounding box] (-2.2,1) rectangle (8.2,-3);
        \node[Takenode, initial] (ST) at (0,0) {$\strongtaken$};
        \node[Takenode] (WT) at ($(ST) + (6,0)$) {$\weaktaken$};
        \node[Notakenode] (SN) at ($(WT) - (0,2)$) {$\strongnotaken$};
        \node[Notakenode] (WN) at ($(SN) - (6,0)$) {$\weaknotaken$};
        
        \draw (ST) to[bend left=12.5] node {$\notaken/m$} (WT);
        \draw (ST) to[in=160,out=200,looseness=7] node[left] {$\taken/1$} (ST);
        \draw (ST) to[in=110,out=70,looseness=7] node[pos=0.85,left] {$\notaken/1{-}m$} (ST);
        
        \draw (WT) to[bend left=12.5] node {$\notaken/m$} (SN);
        \draw (WT) to[bend left=12.5] node[below] {$\taken/m$} (ST);
        \draw (WT) to[in=20,out=-20,looseness=7] node[right] {$\taken/1{-}m$} (WT);
        \draw (WT) to[in=70,out=110,looseness=7] node[pos=0.85] {$\notaken/1{-}m$} (WT);
        
        \draw (SN) to[bend left=12.5] node[below] {$\taken/m$} (WN);
        \draw (SN) to[in=20,out=-20,looseness=7] node[right] {$\taken/1{-}m$} (SN);
        \draw (SN) to[in=290,out=250,looseness=7] node[pos=0.85,right] {$\notaken/1$} (SN);
        
        \draw (WN) to[bend left=12.5] node[above] {$\notaken/m$} (SN);
        \draw (WN) to[bend left=12.5] node[left] {$\taken/m$} (ST);
        \draw (WN) to[in=160,out=200,looseness=7] node[left] {$\taken/1{-}m$} (WN);
        \draw (WN) to[in=250,out=290,looseness=7] node[pos=0.85] {$\notaken/1{-}m$} (WN);
    \end{tikzpicture}
    \caption{A 2-bit PSC, shown with initial state $\strongtaken$ after the attacker's priming phase}
    \label{fig:2bitPSC}
\end{figure}

The attack described in Alg.~\ref{alg:cutoff} exploits the deterministic nature of classical saturating counters, where the next state is uniquely determined by the current state and input. 
To defend against such attacks, Zhao et al.~\cite{JCST2021zhao} proposed to use probabilistic saturating counters. 
This approach introduces a probabilistic threshold into the counter mechanism: 
upon a state transition, a random number is generated and compared to the threshold; 
if the number is below the threshold, the transition proceeds. 
Otherwise, the counter remains in its current state.
Taking the 2-bit PSC illustrated in Fig.~\ref{fig:2bitPSC} as an example, given the threshold $m \in [0,1]$, for each state $s$ and input letter $b \in \setnocond{\taken, \notaken}$, with probability $m$ the PSC transitions from $s$ to the intended target state of $b$;
with the remaining probability $1 - m$ the PSC ignores the input $b$ and remains in $s$.

By adding the threshold $m$, the attacker cannot precisely predict the state of the counter due to the probabilistic nature of the steps leading to the cut-off point. 
The experimental results given in~\cite{JCST2021zhao} show the success rate of the attack under various values of $m$, yet a theoretical analysis is essential to understand the optimal attack strategies and the misprediction rates, crucial to performance evaluation. 
We apply differential privacy to establish theoretical guarantees for defense mechanisms in PSCs. Additionally, we analyze the optimal attack strategy for Alg.~\ref{alg:cutoff}, identifying scenarios where the attacker can successfully infer the victim's execution. 
To mitigate this, we propose a new PSC model and calculate its misprediction rate;
we begin by defining differential privacy for PSCs.

In the context of branch prediction side-channel attacks, the sensitive data corresponds to the direction of a single victim's branch. 
Therefore, we refine the standard notion of neighboring datasets to reflect this specific threat model: 
two datasets (i.e., execution histories) are considered neighbors if they differ only for the outcome of the single victim's branch.
As data publishing mechanism, we consider the attacker based on \functionName{FindCutOffPoint} but we reverse the meaning of its outcome:
the mechanism provides privacy if the attacker fails in computing the right cut-off point with large enough probability.
This results in the following definition of $(\epsilon, \delta)$-differential privacy for PSCs.
\begin{definition}
\label{def:pscdp}
    Given $n \in \naturals$, let $\mathit{out}$ be the random output returned by Alg.~\ref{alg:cutoff} on input $n$ and victim direction $v$.
    A PSC satisfies $(\epsilon, \delta)$-differential privacy if, for every output $c$ of Alg.~\ref{alg:cutoff}, the probability of observing $c$ satisfies
    \begin{align*}
    & \Pr[\mathit{out} = c | v = \taken] \leq \mathrm{e}^\epsilon \Pr[\mathit{out} = c | v = \notaken] + \delta, \\
    & \Pr[\mathit{out} = c | v = \notaken] \leq \mathrm{e}^\epsilon \Pr[\mathit{out} = c | v = \taken] + \delta.
    \end{align*}
\end{definition}
In concrete scenario attacks, we assume the target branch has been executed sufficiently many times with direction $\taken$ that the probability of the counter being in a state other than $\strongtaken$ is in practice negligible (cf.\@ the experiments in Sect.~\ref{sec:experiments}).

This definition provides a natural security metric: the smaller the achievable $\epsilon$, the harder it is for the attacker to distinguish between the two branch directions.
When $\epsilon = 0$ and $\delta = 0$, the output distributions are identical and the attacker gains no information---corresponding to perfect privacy at the cost of a $50\%$ misprediction rate.
When $\epsilon$ is large, the output distributions diverge significantly, allowing the attacker to infer the branch direction with high probability.
Our goal in the remainder of this section is to determine the achievable $\epsilon$ for the existing PSC design and, in Sect.~\ref{sec:enhancedPSC}, to synthesize an enhanced PSC that satisfies a given $\epsilon$-DP requirement.

\subsection{Modeling attacks to PSCs with probabilistic Moore machines}
\label{ssec:PSCattackModel}

The concept of differential privacy on PSCs guides us in analyzing the probability of obtaining each output under different executions of the victim's branch. 
Specifically, given an observation $c$, our goal is to determine the attacker's strategy to infer whether the branch was taken or not.
In~\cite{JCST2021zhao}, experiments fix a parameter $m$ and simulate the victim's behavior multiple times. 
The attacker records the frequency of different outputs when the branch is taken versus not-taken, then guesses the victim's action based on the most frequent outcome. 
However, the limitation of this method is that it requires a large number of tests and does not provide general conclusions for arbitrary values of~$m$.

To address this limitation, we model the attack algorithm using parametric Moore machines, as they allow us to derive the optimal attack strategy. 
This approach allows for more generalized insight regardless of the specific value of $m$.
During the three phases of Alg.~\ref{alg:cutoff}, the directions of the branches executed are different, as well as their probabilities in the different states $s$:
\begin{itemize}
\item 
    In Phase~1, the branch is always taken. 
    Thus, $P_{s, \taken} = 1$ and $P_{s, \notaken} = 0$.
\item 
    In Phase~2, the branch is executed once depending on the victim's thread actual direction $v$, so $P_{s, v} = 1$ and $P_{s, \neg v} = 0$ for the opposite direction $\neg v$.
\item 
    In Phase~3, the branch is always not-taken. 
    Thus, $P_{s, \taken} = 0$ and $P_{s, \notaken} = 1$.
\end{itemize}

\begin{figure}[t]
    \centering
    \begin{tikzpicture}
        \path[use as bounding box] (-2.3,1.55) rectangle (8.9,-2.1);
        \node[label={below:(a) The victim takes the branch}] (BT) at (0,0) {
            \begin{tikzpicture}
                \node[Takenode,initial, initial left] (STp) at (0,0) {$\strongtaken'$};
                \node[Takenode] (ST) at ($(STp) + (2,0)$) {$\strongtaken$};
                \node[Notakenode] (SN) at ($(STp) + (0,-2)$) {$\strongnotaken$};
                \node[Takenode] (WT) at ($(SN) + (2,0)$) {$\weaktaken$};
    
                \draw (STp) to node[pos=1,anchor=south east] {$1$} (ST);
                \draw (ST) to[in=-20,out=20,looseness=7] node[pos=0.8,below] {$1{-}m$} (ST);
                \draw (ST) to node[pos=0.8,right] {$m$} (WT);
                \draw (WT) to[in=-20,out=20,looseness=7] node[pos=0.8,below] {$1{-}m$} (WT);
                \draw (WT) to node[pos=1,anchor=south west] {$m$} (SN);
                \draw (SN) to[in=160,out=200,looseness=7] node[pos=1,anchor=south east] {$1$} (SN);            
            \end{tikzpicture}
        };
        \node[label={below:(b) The victim does not take the branch}] (BN) at ($(BT) + (6,0)$) {
            \begin{tikzpicture}
                \node[Takenode,initial, initial left] (STp) at (0,0) {$\strongtaken'$};
                \node[Takenode] (ST) at ($(STp) + (2,0)$) {$\strongtaken$};
                \node[Notakenode] (SN) at ($(STp) + (0,-2)$) {$\strongnotaken$};
                \node[Takenode] (WT) at ($(SN) + (2,0)$) {$\weaktaken$};
    
                \draw (STp) to node[pos=1,anchor=south east] {$1{-}m$} (ST);
                \draw (STp) to node[pos=0.8,right] {$m$} (WT);
                \draw (ST) to[in=-20,out=20,looseness=7] node[pos=0.8,below] {$1{-}m$} (ST);
                \draw (ST) to node[pos=0.8,right] {$m$} (WT);
                \draw (WT) to[in=-20,out=20,looseness=7] node[pos=0.8,below] {$1{-}m$} (WT);
                \draw (WT) to node[pos=1,anchor=south west] {$m$} (SN);
                \draw (SN) to[in=160,out=200,looseness=7] node[pos=1,anchor=south east] {$1$} (SN);            
            \end{tikzpicture}
        };
    \end{tikzpicture}
    \caption{The probabilistic Moore machines for the attack algorithm}
    \label{fig:2bitPSCprobMooreMachines}
\end{figure}
When $m=0$, only self-loops exist, so $c$ is independent of~$v$ and perfect privacy is achieved---at the cost of a $50\%$ misprediction rate.
When $m=1$, the PSC reduces to the deterministic SC, and the attacker can distinguish $v$ from the number of Phase~3 steps ($3$ for taken vs.~$2$ for not-taken).
For $0 < m < 1$, the probabilistic transitions make the probe count $c$ uncertain; the attacker can only infer $v$ with some probability.
For instance, with $m=0.5$, observing $c=1$ is impossible when the victim takes the branch but possible when it does not take the branch, already indicating a structural asymmetry in the two output distributions.
Our goal is to compare $\Pr[\mathit{out}=c \mid v=\taken]$ and $\Pr[\mathit{out}=c \mid v=\notaken]$ for each~$c$ to derive the optimal strategy.
To achieve our goal, we model Alg.~\ref{alg:cutoff} for both taken and not-taken scenarios of the victim's branch execution;
the corresponding probabilistic models are shown in Fig.~\ref{fig:2bitPSCprobMooreMachines} and they encode both the attacker and the victim behaviors at the beginning of the victim's Phase~2 execution: 
the transition from $\strongtaken'$ in each machine represents the victim's branch execution, reaching the appropriate state of the PSC $\strongtaken$ or $\weaktaken$ from where we simulate Phase~3 of Alg.~\ref{alg:cutoff}. 
Formally, let $M_{\taken}$ and $M_{\notaken}$ represent the transition matrices of the probabilistic Moore machine when the branch is taken or not-taken, respectively. 
Let $\indvec{\strongtaken}, \indvec{\strongnotaken} \in \posreals^{4}$ be two column vectors, with value $0$ everywhere except for value $1$ in the entry corresponding to state $\strongtaken$ in $\indvec{\strongtaken}$ and to state $\strongnotaken$ in $\indvec{\strongnotaken}$, respectively. 
Then $\transpose{\indvec{\strongtaken}} M_{\taken}^{c+1} \indvec{\strongnotaken}$ gives the probability of being in the absorbing state $\strongnotaken$ after $c+1$ steps when the victim takes the branch, which coincides with observing the cut-off point $c$, 
that is,
$\Pr[\mathit{out} = c | v = \taken] = \transpose{\indvec{\strongtaken}} M_{\taken}^{c+1} \indvec{\strongnotaken}$.
Similarly,
$\Pr[\mathit{out} = c | v = \notaken] = \transpose{\indvec{\strongtaken}} M_{\notaken}^{c+1} \indvec{\strongnotaken}$.
These probabilities are the conditional probabilities required to ensure differential privacy in PSCs as defined in Def.~\ref{def:pscdp}.

\subsection{Optimal attack strategy}
\label{ssect:Optimal}

The attacker's strategy is based on comparing the probabilities $\Pr[\mathit{out} = c | v = \taken]$ and $\Pr[\mathit{out} = c | v = \notaken]$ and selecting the branch direction with the highest probability. 
Specifically, we evaluate whether $c > 1/m$ satisfies the inequality
\begin{equation}
\label{eq:attackerStrategyCondition}
    \transpose{\indvec{\strongtaken}} M_{\taken}^{c+1} \indvec{\strongnotaken} > \transpose{\indvec{\strongtaken}} M_{\notaken}^{c+1} \indvec{\strongnotaken}.
\end{equation}
This defines the attacker's optimal strategy: 
if $c = \functionName{FindCutOffPoint}(n,v)$ satisfies $c > 1/m$, then it infers that $v = \taken$; 
otherwise, it concludes that  $v = \notaken$.
In fact, the expected number of steps from $\weaktaken$ to $\strongnotaken$ is $1/m$, so having $c > 1/m$ hints that the PSC was in state $\strongtaken$ after Phase~2 of Alg.~\ref{alg:cutoff} instead of being in $\weaktaken$, thus it is more likely that the victim took the branch (i.e., $v = \taken$).
%
Regardless of the value of $m$, if $c = 1$ is observed, the attack succeeds with probability $1$, indicating that no $(\epsilon, \delta)$-differential privacy can be achieved when $\delta < 1$. 
This limitation arises from the PSC structure: 
if the victim takes the branch, at least two steps are needed to reach $\strongnotaken$ from $\strongtaken$, unlike the case where the branch is not taken, as one step suffices to reach $\strongnotaken$. 
We design an enhanced PSC with improved security to address this issue in Sect.~\ref{sec:enhancedPSC}.

We now quantify the success probability of this strategy.
Bayes' theorem gives the posterior probability of the attacker's guess being correct:

\[
    \Pr[v{=}b|\mathit{out}{=}c] =
    \frac{\Pr[\mathit{out}{=}c|v{=}b]\Pr[v{=}b]}
    {\sum_{b' \in \{\taken,\notaken\}} \Pr[\mathit{out}{=}c|v{=}b']\Pr[v{=}b']},
\]
where $b$ is the attacker's guess ($\taken$ if $c > 1/m$, $\notaken$ otherwise).
When the attacker has no prior knowledge, we use the uniform prior $\Pr[v{=}\taken] = \Pr[v{=}\notaken] = 0.5$, and the prior factor cancels from the numerator and denominator.
The likelihoods come from the matrix powers $M_{\taken}^{c+1}$ and $M_{\notaken}^{c+1}$ (Sect.~\ref{ssec:PSCattackModel}); when one likelihood dominates the other, the posterior approaches~$1$ and the attack succeeds with near certainty.

The connection to differential privacy is immediate.
If the PSC provides $(\epsilon,0)$-differential privacy, Def.~\ref{def:pscdp} gives $\Pr[\mathit{out}{=}c|v{=}\taken] \leq \mathrm{e}^{\epsilon} \cdot \Pr[\mathit{out}{=}c|v{=}\notaken]$, yielding the well-known upper bound on the success probability~\cite{WassermanZhou2010,KairouzOhViswanath2015}:
\[
    \Pr[v{=}b|\mathit{out}{=}c] \leq \frac{\mathrm{e}^{\epsilon}}{1 + \mathrm{e}^{\epsilon}} = \frac{1}{1 + \mathrm{e}^{-\epsilon}}.
\]
This bound applies specifically to pure $\epsilon$-differential privacy ($\delta = 0$); for general $(\epsilon,\delta)$-DP with $\delta > 0$, the bound requires additional correction terms~\cite{KairouzOhViswanath2015}.
As $\epsilon \to 0$ this bound approaches $0.5$ (random guessing, meaning the defense is effective); a large $\epsilon$ allows it to approach~$1$ (the attacker can determine the branch direction with near certainty).
This provides a clear quantitative relationship: the privacy budget $\epsilon$ directly governs the worst-case adversarial advantage.

In summary, the attacker's strategy operates as follows: for each observed probe count $c$, the attacker computes the posterior probability $\Pr[v{=}b \mid \mathit{out}{=}c]$ using Bayes' theorem and selects the direction $b$ with the higher posterior.
The key insight is that the PSC's transition structure creates an asymmetry in the distribution of $c$ under different victim directions: when the victim takes the branch, the PSC remains at $\strongtaken$ and requires more $\notaken$ steps to reach $\strongnotaken$, leading to larger $c$ values; when the victim does not take, the PSC moves to $\weaktaken$ (or stays at $\strongtaken$ with reduced probability in the enhanced model), leading to smaller $c$ values.
This structural asymmetry is precisely what the attacker exploits: the ratio between $\Pr[\mathit{out}{=}c \mid v{=}\taken]$ and $\Pr[\mathit{out}{=}c \mid v{=}\notaken]$ grows as $c$ deviates from the expected transition count $1/m$, enabling the attacker to infer the branch direction with increasing confidence.
The DP guarantee ensures that this ratio, which means the attacker's advantage, is bounded by $\epsilon$.

\section{Enhanced PSC Model and Parameter Synthesis}
\label{sec:enhancedPSC}

\begin{figure}[t]
\centering
    
    
    
        
    \begin{tikzpicture}
        \path[use as bounding box] (-2.2,1) rectangle (8.2,-3);
        \node[Takenode] (ST) at (0,0) {$\strongtaken$};
        \node[Takenode] (WT) at ($(ST) + (6,0)$) {$\weaktaken$};
        \node[Notakenode] (SN) at ($(WT) - (0,2)$) {$\strongnotaken$};
        \node[Notakenode] (WN) at ($(SN) - (6,0)$) {$\weaknotaken$};
        
        \draw (ST) to[bend left=12.5] node[above] {$\notaken/m(1{-}p)$} node[below] {$\taken/mp$} (WT);
        \draw (ST) to[in=160,out=200,looseness=7] node[left] {$\taken/1{-}mp$} (ST);
        \draw (ST) to[in=110,out=70,looseness=7] node[pos=0.85,left] {$\notaken/1{-}m(1{-}p)$} (ST);
        
        \draw (WT) to[bend left=12.5] node {$\notaken/m$} (SN);
        \draw (WT) to[bend left=12.5] node[below] {$\taken/m$} (ST);
        \draw (WT) to[in=20,out=-20,looseness=7] node[right] {$\taken/1{-}m$} (WT);
        \draw (WT) to[in=70,out=110,looseness=7] node[pos=0.85] {$\notaken/1{-}m$} (WT);
        
        \draw (SN) to[bend left=12.5] node[below] {$\taken/m(1{-}p)$} node[above] {$\notaken/mp$}(WN);
        \draw (SN) to[in=20,out=-20,looseness=7] node[right] {$\taken/1{-}m(1{-}p)$} (SN);
        \draw (SN) to[in=290,out=250,looseness=7] node[pos=0.85,right] {$\notaken/1{-}mp$} (SN);
        
        \draw (WN) to[bend left=12.5] node[above] {$\notaken/m$} (SN);
        \draw (WN) to[bend left=12.5] node[left] {$\taken/m$} (ST);
        \draw (WN) to[in=160,out=200,looseness=7] node[left] {$\taken/1{-}m$} (WN);
        \draw (WN) to[in=250,out=290,looseness=7] node[pos=0.85] {$\notaken/1{-}m$} (WN);
    \end{tikzpicture}
    \caption{The probabilistic Moore machine for the enhanced PSC}
    \label{fig:enhancedPSC}
\end{figure}

As seen in the previous section, it is still possible for an attacker to accurately infer the branch execution of the victim because there exist deterministic transitions in the PSC.
In particular, the observation $c = 1$ always reveals that $v = \notaken$, since reaching $\strongnotaken$ in one step is impossible when the victim takes the branch.
This constitutes a fundamental vulnerability that cannot be mitigated by adjusting the existing parameter $m$ alone.
In this section, we design a new model based on the original PSC and establish its differential privacy properties by parameter synthesis.

\subsection{The enhanced PSC model}

We enhance the PSC model by introducing additional probabilistic choices. 
The original probability update mechanism is retained, where the transitions are updated with probability $m$. 
When a transition is triggered, the actual state reached by the transition is probabilistically decided by means of a new parameter $p \in [0,1]$, termed the \emph{defense parameter}: 
for state $\strongtaken$ and input $\taken$, instead of remaining in $\strongtaken$ for sure, with probability $p$ the PSC moves to $\weaktaken$ and for input $\notaken$, instead of going to $\weaktaken$ for sure, with probability $p$ the PSC remains in $\strongtaken$;
for state $\strongnotaken$ the defense is applied symmetrically.
The resulting enhanced PSC is shown in Fig.~\ref{fig:enhancedPSC}.
Notably, forcing all transitions to be probabilistic from both $\strongtaken$ and $\strongnotaken$ is essential to achieve security: 
as we will see below, they prevent an attack based on Alg.~\ref{alg:cutoff} similar to the one presented in Sect.~\ref{ssect:Optimal}.
Note that our enhanced PSC generalizes both the classical saturating counters (by setting $m = 1$ and $p = 0$) and the probabilistic saturating counters (when $p = 0$).


\begin{figure}[t]
    \centering
    \begin{tikzpicture}
        \path[use as bounding box] (-2.3,1.55) rectangle (8.9,-2.1);
        \node[label={below:(a) The victim takes the branch}] (BT) at (0,0) {
            \begin{tikzpicture}
                \node[Takenode,initial, initial left] (STp) at (0,0) {$\strongtaken'$};
                \node[Takenode] (ST) at ($(STp) + (2,0)$) {$\strongtaken$};
                \node[Notakenode] (SN) at ($(STp) + (0,-2)$) {$\strongnotaken$};
                \node[Takenode] (WT) at ($(SN) + (2,0)$) {$\weaktaken$};
    
                \draw (STp) to node[pos=1,anchor=south east] {$1{-}mp$} (ST);
                \draw (STp) to[left] node {$mp$} (WT);
                \draw (ST) to[in=-20,out=20,looseness=7] node[pos=1,anchor=north west] {$1{-}m(1{-}p)$} (ST);
                \draw (ST) to node[pos=0.8,right] {$m(1{-}p)$} (WT);
                \draw (WT) to[in=-20,out=20,looseness=7] node[pos=0.8,below] {$1{-}m$} (WT);
                \draw (WT) to node[pos=1,anchor=south west] {$m$} (SN);
                \draw (SN) to[in=160,out=200,looseness=7] node[pos=1,anchor=south east] {$1$} (SN);            
            \end{tikzpicture}
        };
        \node[label={below:(b) The victim does not take the branch}] (BN) at ($(BT) + (6,0)$) {
            \begin{tikzpicture}
                \node[Takenode,initial, initial left] (STp) at (0,0) {$\strongtaken'$};
                \node[Takenode] (ST) at ($(STp) + (3,0)$) {$\strongtaken$};
                \node[Notakenode] (SN) at ($(STp) + (0,-2)$) {$\strongnotaken$};
                \node[Takenode] (WT) at ($(SN) + (3,0)$) {$\weaktaken$};
    
                \draw (STp) to node[pos=1,anchor=south east] {$1{-}m(1{-}p)$} (ST);
                \draw (STp) to node[left] {$m(1{-}p)$} (WT);
                \draw (ST) to[in=-20,out=20,looseness=7] node[pos=1,anchor=north west] {$1{-}m(1{-}p)$} (ST);
                \draw (ST) to node[pos=0.8,right] {$m(1{-}p)$} (WT);
                \draw (WT) to[in=-20,out=20,looseness=7] node[pos=0.8,below] {$1{-}m$} (WT);
                \draw (WT) to node[pos=1,anchor=south west] {$m$} (SN);
                \draw (SN) to[in=160,out=200,looseness=7] node[pos=1,anchor=south east] {$1$} (SN);            
            \end{tikzpicture}
        };
    \end{tikzpicture}
    \caption{The enhanced probabilistic Moore machines for the attack algorithm}
    \label{fig:2bitEnhancedPSCprobMooreMachines}
\end{figure}

We now show how to synthesize parameters $m$ and $p$ such that the enhanced PSC satisfies a given $(\epsilon, \delta)$-differential privacy requirement. 
Similar to Sect.~\ref{ssec:PSCattackModel}, we consider the probabilistic Moore machine corresponding to the attack model, depicted in Fig.~\ref{fig:2bitEnhancedPSCprobMooreMachines}.
To ensure $(\epsilon, \delta)$-differential privacy, we must satisfy the requirements of Def.~\ref{def:pscdp}. 
This involves solving the two differential privacy bounds to determine the feasible range for $m$ and $p$. 
Given that the condition in Eq.~\eqref{eq:attackerStrategyCondition} underlying the attacker's strategy involves polynomials in $m$, $p$, and the exponent $c$, finding explicit analytic solutions is challenging. 
However, we  can fix one parameter and derive a suitable range for the other:
suppose that we are verifying the enhanced PSC for $(\epsilon, \delta)$-differential privacy with $\epsilon = 10^{-2}$ and $\delta = 10^{-5}$. 
We start by fixing the probability $m$ at $0.5$, which has shown good performance in~\cite{JCST2021zhao}. 
Using SageMath~\cite{zimmermann:18:SageMath} for symbolic algebra computation, we obtain $p \in [0.4899, 0.5017]$ fulfilling the differential privacy constraints.

We also establish a formal relationship between the defense parameter $p$ and the achievable privacy level, as presented in Thm.~\ref{thm:pTwoBitDifferentialPrivacyResult}:
\begin{restatable}{theorem}{pTwoBitDifferentialPrivacyResult}
\label{thm:pTwoBitDifferentialPrivacyResult}
    Given an enhanced PSC $\mc$ with defense parameter $p\in (0,1)$, 
    if $\privacy\geq \frac{1}{2}$,
    then $\mc$ satisfies $(\ln\frac{\privacy}{1-\privacy},0)$-differential privacy;
    otherwise, it satisfies $(\ln\frac{1-\privacy}{\privacy},0)$-differential privacy.
\end{restatable}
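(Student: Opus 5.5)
The plan is to exploit the fact that, in the attack models of Fig.~\ref{fig:2bitEnhancedPSCprobMooreMachines}, the victim direction $v$ only affects the first transition out of $\strongtaken'$. Phase~3 is the same Markov chain on $\{\strongtaken,\weaktaken,\strongnotaken\}$ in both machines. So for every output $c$ of Alg.~\ref{alg:cutoff} (including the truncated output when the loop runs out), let $f(c)$ be the probability of output $c$ when Phase~3 starts in $\strongtaken$, and $g(c)$ the probability when it starts in $\weaktaken$. Both are nonnegative and independent of $v$. Reading off the figure, I would write
\[
\Pr[\mathit{out}=c\mid v=\taken] = (1-mp)\,f(c) + mp\,g(c),
\]
\[
\Pr[\mathit{out}=c\mid v=\notaken] = (1-m(1-p))\,f(c) + m(1-p)\,g(c).
\]
This step is bookkeeping: it amounts to the identity $\transpose{\indvec{\strongtaken}} M_v^{c+1}\indvec{\strongnotaken} = \sum_{s} M_v(\strongtaken',s)\cdot(\text{Phase-3 probability from } s)$.

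Next I would use the elementary mixture lemma. For nonnegative $f,g$ and positive weights $a,b,a',b'$,
\[
a f + b g \le \max\!\left(\tfrac{a}{a'},\tfrac{b}{b'}\right)(a' f + b' g),
\]
which follows termwise. All four weights are positive because $p\in(0,1)$ and $m\in(0,1]$; at $m=1$ the $f$-weights $1-mp=1-p$ and $1-m(1-p)=p$ are still positive. Applying the lemma in both directions, the DP ratio in Def.~\ref{def:pscdp} is bounded by the larger of the two weight ratios:
\[
\frac{1-mp}{1-m+mp}\ (\text{or its inverse}) \quad\text{and}\quad \frac{mp}{m(1-p)}=\frac{p}{1-p}\ (\text{or its inverse}).
\]
If some output has $f(c)=g(c)=0$, both sides vanish and the inequality holds trivially.

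The remaining step, and the one I expect to be the actual content, is to show that the $\weaktaken$-ratio always dominates the $\strongtaken$-ratio. Consider the case $p\ge\frac12$. Here $1-mp\le 1-m+mp$, so the $\strongtaken$-weight ratio for the direction $\taken$ over $\notaken$ is at most $1\le\frac{p}{1-p}$. For the other direction I need
\[
\frac{1-m+mp}{1-mp}\le\frac{p}{1-p},
\]
which after cross-multiplying becomes
\[
p(1-mp)-(1-p)(1-m+mp) = (2p-1)(1-m)\ge 0.
\]
This holds since $p\ge\frac12$ and $m\le 1$. Meanwhile $\frac{1-p}{p}\le 1\le \frac{p}{1-p}$. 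So both inequalities of Def.~\ref{def:pscdp} hold with $e^{\epsilon}=\frac{p}{1-p}$ and $\delta=0$, giving $\epsilon=\ln\frac{p}{1-p}$.

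The case $p<\frac12$ is symmetric. Swapping the roles of $p$ and $1-p$, the same algebra gives $(1-2p)(1-m)\ge0$, and all ratios are bounded by $\frac{1-p}{p}$, yielding $\epsilon=\ln\frac{1-p}{p}$. Finally, I would note that the bound does not depend on $m$ or on $n$, which is exactly the form claimed in Thm.~\ref{thm:pTwoBitDifferentialPrivacyResult}.
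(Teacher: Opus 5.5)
Your proposal is correct and follows essentially the paper's route: the paper also conditions on the victim's single transition out of $\strongtaken$, obtaining $\Pr[\mathit{out}=c\mid v=\taken]=mp\,a_c+(1-mp)\,p_{c-1}$ and $\Pr[\mathit{out}=c\mid v=\notaken]=m(1-p)\,a_c+(1-m+mp)\,p_{c-1}$, where the shared terms $p_{c-1}$ and $a_c$ play the roles of your $f(c)$ and $g(c)$. The paper then only asserts the $\epsilon$ bound, whereas your mixture (mediant) lemma together with the identity $p(1-mp)-(1-p)(1-m+mp)=(2p-1)(1-m)$ supplies that missing step, and it does so without depending on the explicit formula for $a_c$.
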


The detailed proof of Thm.~\ref{thm:pTwoBitDifferentialPrivacyResult} is provided in Appendix~\ref{appendix-formal}. 
A notable feature of this result is that the achievable differential privacy level depends solely on $p$ and is \emph{independent of $m$}. 
Intuitively, $m$ acts as a uniform ``speed'' parameter that scales all non-self-loop transition probabilities equally in both the taken and not-taken attack models (cf.\@ Fig.~\ref{fig:2bitEnhancedPSCprobMooreMachines}). 
The differential privacy guarantee is determined by the \emph{ratio} of the output probabilities $\Pr[\mathit{out}=c \mid v=\taken]$ and $\Pr[\mathit{out}=c \mid v=\notaken]$; 
since $m$ appears as a common multiplicative factor in both, it cancels out in this ratio via the binomial theorem (see the proof in Appendix~\ref{appendix-formal}). 
In contrast, $p$ introduces an \emph{asymmetry} exclusively at the strong states ($\strongtaken$/$\strongnotaken$), directly affecting how differently the two models behave and thus governing the privacy level.

Moreover, we remark that Thm.~\ref{thm:pTwoBitDifferentialPrivacyResult} establishes \emph{pure} $\epsilon$-differential privacy (i.e., $\delta = 0$) for all values of $p \in (0,1)$.
We adopted the more general $(\epsilon,\delta)$-DP framework (Def.~\ref{def:pscdp}) as the starting point of our analysis because it is the standard formulation in the DP literature~\cite{DR:14:AFDP} and, importantly, because we first needed to show that existing PSCs \emph{fail} to satisfy even this weaker notion (Sect.~\ref{ssect:Optimal}). 
The fact that our enhanced PSC achieves the stronger $\delta = 0$ guarantee is a positive outcome of the design.

The interplay between $p$ and the resulting privacy--utility trade-off is analyzed in detail in Sect.~\ref{ssec:privacy-utility}.

\subsection{Estimating the misprediction rate}
\label{ssec:counter-misprediction}

The branch execution of a program can be modeled as a sequence $\sigma$ in $\{\taken, \notaken\}^{*}$. 
Exact misprediction rates over all finite traces are generally infeasible to compute: traces grow exponentially, and real executions may be history-dependent. 
Worst-case sequences can force deterministic SCs to mispredict every step, but such traces are rare in practice; we therefore use a statistical estimate.

For a given branch, let the probability of executing $\taken$ be $t \in (0,1)$ and of $\notaken$ be $s = 1 - t$; let $q = 1 - p$.
Here, $t$ abstracts the empirical taken probability of one static branch, not a full generative model of program traces.
For $m > 0$ and non-degenerate branch behavior, the induced finite Markov chain is irreducible and aperiodic, and hence has a unique stationary distribution.
Let $\mu = [a,b,c,d]$ denote this stationary distribution over the states $\strongtaken$, $\weaktaken$, $\weaknotaken$, and $\strongnotaken$, respectively.
The stationary distribution is obtained by solving $\mu M = \mu$ for the transition matrix $M$ of Fig.~\ref{fig:enhancedPSC}; the misprediction rate is then $r = (a+b)s + (c+d)t$.
Solving these equations yields
\[
    r = \frac{ts(qt + ps)(1 + qs + pt) + ts(qs + pt)(1 + qt + ps)}{t(qt + ps)(1 + qs + pt) + s(qs + pt)(1 + qt + ps)}.
\]
Appendix~\ref{sec:rdetail} gives the full calculation.

\begin{figure}[t]
    \centering
    \includegraphics[width=0.32\textwidth]{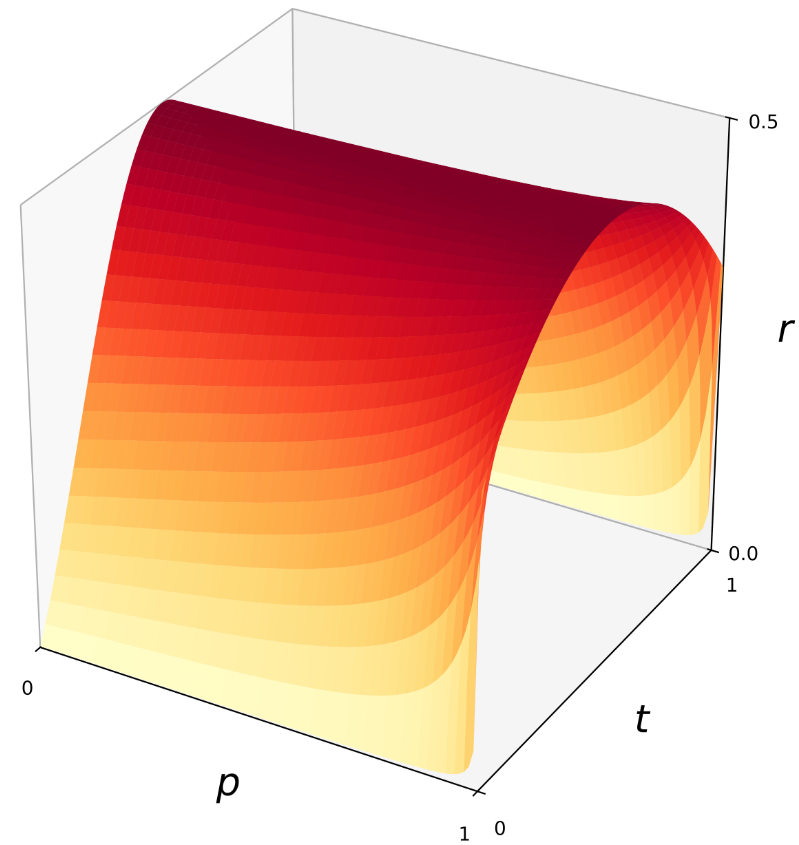}
    \caption{The stationary misprediction rate as a function of the defense parameter $p$ and branch taken probability $t$}
    \label{fig:misp}
\end{figure}


Since $m$ uniformly scales all non-self-loop transitions (cf.~Fig.~\ref{fig:2bitEnhancedPSCprobMooreMachines}), it affects convergence speed but not equilibrium: smaller $m$ only makes the PSC dwell longer in each state, while transition directions are governed by $t$ and $p$.
Thus the stationary distribution and misprediction rate are determined entirely by $p$ and $t$.
For $p = 0$, the rate reduces to that of the deterministic SC.
Fig.~\ref{fig:misp} shows $r$ for $p \in [0, 1]$ and $t \in [0.001, 0.999]$: for fixed $t \in (0,1)$, $r$ increases monotonically with $p$, reaching $0.5$ at $p = 1$; Sect.~\ref{ssec:privacy-utility} discusses the privacy--utility implication.


\subsection{Privacy-utility trade-off}
\label{ssec:privacy-utility}

The DP guarantee of the enhanced PSC (Thm.~\ref{thm:pTwoBitDifferentialPrivacyResult}) and its stationary misprediction rate (Sect.~\ref{ssec:counter-misprediction}) induce a privacy-utility curve.
Following the DP literature~\cite{DR:14:AFDP,GhoshRS2012,GengV2016}, we define utility as prediction accuracy $u = 1 - r$.

\paragraph*{The privacy-utility curve.}
By Thm.~\ref{thm:pTwoBitDifferentialPrivacyResult}, $\epsilon = \abs{\ln \frac{p}{1-p}}$, decreasing on $p \in (0,0.5]$ and increasing on $p \in [0.5,1)$.
For fixed $t \in (0,1)$, $r$ increases monotonically with $p$ (Sect.~\ref{ssec:counter-misprediction}).
Thus improving privacy requires moving $p$ toward $0.5$, which increases $r$.
At $p=0.5$ we obtain perfect privacy ($\epsilon=0$) but $r=0.5$; as $p$ tends to $0$ (or $1$), prediction approaches the deterministic SC while $\epsilon$ tends to $\infty$.

\paragraph*{Optimal defense parameter selection.}
For any target privacy budget $\epsilon > 0$, there exist two values $p_{1} = \frac{1}{1 + e^{\epsilon}} < \frac{1}{2}$ and $p_{2} = \frac{e^{\epsilon}}{1 + e^{\epsilon}} > \frac{1}{2}$ that achieve the same $(\epsilon, 0)$-differential privacy. 
Since $r$ is monotonically increasing in $p$, choosing $p_{1} < 0.5$ always yields a strictly lower misprediction rate than $p_{2} > 0.5$ for the same privacy guarantee.

\begin{proposition}
\label{prop:optimalP}
    For any target privacy budget $\epsilon > 0$ and branch probability $t \in (0, 1)$, the defense parameter $p^{*} = \frac{1}{1 + e^{\epsilon}}$ minimizes the misprediction rate among all values of $p$ that satisfy $(\epsilon, 0)$-differential privacy.
\end{proposition}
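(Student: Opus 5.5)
The plan is to combine Thm.~\ref{thm:pTwoBitDifferentialPrivacyResult} with the strict monotonicity of the stationary misprediction rate $r$ in $p$ (Sect.~\ref{ssec:counter-misprediction}). By Thm.~\ref{thm:pTwoBitDifferentialPrivacyResult}, an enhanced PSC with parameter $p$ satisfies $(\epsilon',0)$-DP with $\epsilon'(p)=\abs{\ln\frac{p}{1-p}}$. Since $(\epsilon',0)$-DP implies $(\epsilon,0)$-DP whenever $\epsilon'\le\epsilon$, every $p$ in
\[
\left[\tfrac{1}{1+e^{\epsilon}},\tfrac{e^{\epsilon}}{1+e^{\epsilon}}\right]
\]
is feasible.

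For the converse, I need that $\epsilon'(p)$ is tight. Take the observation $c=0$. From Fig.~\ref{fig:2bitEnhancedPSCprobMooreMachines}, reaching $\strongnotaken$ after one step requires the victim's step to land in $\weaktaken$ and the next step to move to $\strongnotaken$. This gives $\Pr[\mathit{out}=0\mid v=\taken]=mp\cdot m$ and $\Pr[\mathit{out}=0\mid v=\notaken]=m(1-p)\cdot m$. Their ratio is exactly $\frac{p}{1-p}$, and Def.~\ref{def:pscdp} (in both directions, with $\delta=0$) then forces $e^{\epsilon}\ge\max\{\frac{p}{1-p},\frac{1-p}{p}\}$. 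Hence the feasible set of $p\in(0,1)$ is exactly the interval $[p_1,p_2]$, with $p_1=\frac{1}{1+e^{\epsilon}}$ and $p_2=\frac{e^{\epsilon}}{1+e^{\epsilon}}$. This is where I would take care to check that the probe count $c=0$ indeed corresponds to $\strongnotaken$ being reached after the first Phase~3 step, as in the indexing $M^{c+1}$.

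Next, I need that $r(p)$ is strictly increasing on $[0,1]$ for fixed $t\in(0,1)$. This is the main obstacle, since the paper asserts it only from the plot in Fig.~\ref{fig:misp}. I would write $r=ts\cdot\frac{N}{D}$, where
\begin{align*}
N &= A(1+B)+B(1+A),\\
D &= tA(1+B)+sB(1+A),
\end{align*}
with $A=qt+ps$ and $B=qs+pt$. Substituting $q=1-p$ gives $A=t+p(s-t)$ and $B=s+p(t-s)$, so $A+B=1$ identically. Then $N=1+2AB$ and $D=tA+sB+AB$. I would set $x=p(s-t)$, so that $A=t+x$ and $B=s-x$, and compute $\frac{d}{dx}$ of the quotient, or equivalently of $D/N$, showing its sign. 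I expect $D/N$ to be decreasing in $p$, since $tA+sB=t^2+s^2+x(t-s)=t^2+s^2-p(s-t)^2$ decreases with $p$, while $AB$ moves toward $\frac14$. The endpoints are consistent with this: at $p=1$ we get $A=s$, $B=t$, and $r=\frac{ts(1+2ts)}{3ts}=\frac{1+2ts}{3}$. This should be checked against the claimed value $0.5$; if the check fails, I would rely only on monotonicity on the relevant range. If the direct derivative is messy, I would clear denominators and verify that the resulting polynomial numerator in $p$ has constant sign on $[0,1]$ by factoring out $(s-t)^2$. The symmetric case $t=s$ is trivial, since $r$ is then constant, and in that case $p^{*}$ is still a minimizer.

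Finally, given strict monotonicity (or monotonicity in the weak sense), the minimum of $r$ over the feasible interval $[p_1,p_2]$ is attained at its left endpoint $p_1=p^{*}=\frac{1}{1+e^{\epsilon}}$, which proves the proposition.
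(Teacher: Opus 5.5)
Your proof is correct and has the same skeleton as the paper's: the feasible set of $p$ is $[p_1,p_2]$ and $r$ is increasing in $p$, so the left endpoint $p_1=p^{*}$ is optimal. It fills two gaps the paper leaves open. The paper reads off ``exactly $(\epsilon,0)$-DP only at $p_1,p_2$'' from Thm.~\ref{thm:pTwoBitDifferentialPrivacyResult}, which only gives sufficiency, and it takes monotonicity of $r$ from Fig.~\ref{fig:misp}.

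Your tightness argument is correct up to a relabeling. The path $\strongtaken'\to\weaktaken\to\strongnotaken$ is the observation $c=1$ in the paper's convention: $c+1$ steps including the victim's, Alg.~\ref{alg:cutoff} returns the number of mispredictions, and Sect.~\ref{ssect:Optimal} calls the analogous leak of the original PSC ``$c=1$''. The observation $c=0$ has probability $0$ for both directions and gives no constraint. Your values $m^{2}p$ and $m^{2}(1-p)$, and the resulting constraint $e^{\epsilon}\ge\max\{\frac{p}{1-p},\frac{1-p}{p}\}$ (for $m>0$), stand.

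Your monotonicity plan also goes through. With $d=(s-t)^{2}$ and $A+B=1$ one gets $AB=ts+p(1-p)d$ and $tA+sB=1-2ts-pd$, so
\[
r(p)=ts\,\frac{1+2ts+2p(1-p)d}{1-ts-p^{2}d}.
\]
The numerator of $r'(p)$ is $2ts\,d\,\bigl(1-ts-p(1-p)d\bigr)\ge\frac{3}{2}ts\,d$. Hence $r$ is strictly increasing on $[0,1]$ for $t\neq\frac12$ and constant for $t=\frac12$. At $t=\frac12$ the paper's strict inequality $r(p_1,t)<r(p_2,t)$ actually fails, and your weak version handles this case. Do carry out the derivative rather than rely on the heuristic ``$AB$ moves toward $\frac14$'', which holds only for $p\le\frac12$; on $[\frac12,1]$ your $N$ decreases.

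Your endpoint suspicion is justified: $r(1)=\frac{1+2ts}{3}$ and $r(\frac12)=2ts$, which matches the PSC$(0.5,0.5)$ column of Table~\ref{tab:prob}. So the paper's remarks that $r$ equals $0.5$ at these points are inaccurate, though this does not affect the proposition.
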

\begin{proof}
    By Thm.~\ref{thm:pTwoBitDifferentialPrivacyResult}, the set of $p$ values achieving exactly $(\epsilon, 0)$-DP is $\{p_{1}, p_{2}\}$ where $p_{1} = \frac{1}{1+e^{\epsilon}}$ and $p_{2} = \frac{e^{\epsilon}}{1+e^{\epsilon}} = 1 - p_{1}$. 
    Since $r$ is monotonically increasing in $p$ and $p_{1} < p_{2}$, we have $r(p_{1}, t) < r(p_{2}, t)$ for all $t \in (0,1)$.
    Moreover, any $p'$ satisfying $(\epsilon', 0)$-DP with $\epsilon' \leq \epsilon$ must have $p' \in [p_{1}, p_{2}]$, which gives $r(p', t) \geq r(p_{1}, t)$.
\end{proof}

Consequently, the defense parameter should always be chosen from the interval $(0, 0.5]$. 
For example, to achieve $(\ln 9, 0)$-DP, one should set $p^{*} = 0.1$ (not $p = 0.9$), which significantly reduces the misprediction rate while providing the same privacy guarantee.

\paragraph{Comparison with the randomized response mechanism.}
An alternative is the \emph{randomized response} (RR) mechanism~\cite{Warner1965,DR:14:AFDP}, which in our setting randomizes the branch input at \emph{every} counter state: the true direction $b$ is retained with probability $p_{\mathsf{RR}} = \frac{e^{\epsilon}}{1 + e^{\epsilon}}$ and flipped otherwise. 
By DP post-processing~\cite{DR:14:AFDP}, composing RR with any deterministic saturating counter yields an $(\epsilon,0)$-DP mechanism for the attack output.

The key difference is \emph{where} randomization is applied: RR perturbs every branch input, whereas our enhanced PSC randomizes only at the strong states ($\strongtaken$/$\strongnotaken$), preserving weak-state updates and yielding better accuracy for the same privacy guarantee.
Under RR, the SC perceives an effective taken probability $t_{\mathsf{eff}} = t \cdot p_{\mathsf{RR}} + (1-t)(1-p_{\mathsf{RR}})$; the resulting misprediction rate is:
\begin{equation}
\label{eq:rrMisprediction}
    r_{\mathsf{RR}} = \frac{(1{-}t) \cdot t_{\mathsf{eff}}^2(2{-}t_{\mathsf{eff}}) + t \cdot (1{-}t_{\mathsf{eff}})^2(1{+}t_{\mathsf{eff}})}{1 - t_{\mathsf{eff}}(1{-}t_{\mathsf{eff}})}.
\end{equation}

\begin{table}[t]
    \caption{Misprediction rates: enhanced PSC (with $p^{*} = \frac{1}{1+e^{\epsilon}}$) vs.\ randomized response (RR) on a deterministic 2-bit SC, for selected $\epsilon$ and branch probability $t$. 
    $r_{\mathrm{SC}}$ is the rate of the baseline deterministic SC ($p=0, m=1$), and Reduction is $(r_{\mathsf{RR}} - r_{\mathsf{PSC}}) / (r_{\mathsf{RR}} - r_{\mathrm{SC}})$.}
    \label{tab:rr-comparison}
    \centering
    \begin{tabular*}{\linewidth}{@{\extracolsep{\fill}}cccccc@{}}
        \toprule
        $\epsilon$ & $t$ & $r_{\mathrm{SC}}$ & $r_{\mathrm{PSC}}$ & $r_{\mathrm{RR}}$ & Reduction \\
        \midrule
        \multirow{3}{*}{$\ln 9 \approx 2.20$} 
        & 0.9 & 0.117 & 0.129 & 0.155 & 68\% \\
        & 0.8 & 0.251 & 0.265 & 0.287 & 63\% \\
        & 0.7 & 0.377 & 0.386 & 0.399 & 61\% \\
        \midrule
        \multirow{3}{*}{$\ln 3 \approx 1.10$} 
        & 0.9 & 0.117 & 0.147 & 0.255 & 78\% \\
        & 0.8 & 0.251 & 0.285 & 0.357 & 68\% \\
        & 0.7 & 0.377 & 0.399 & 0.435 & 62\% \\
        \bottomrule
    \end{tabular*}
\end{table}

Table~\ref{tab:rr-comparison} compares the two approaches for representative privacy budgets and branch probabilities. 
In all cases, the enhanced PSC achieves a lower misprediction rate than RR; the advantage is most pronounced for highly biased branches and tighter privacy budgets (up to $78\%$ reduction at $\epsilon = \ln 3$, $t = 0.9$).

\paragraph{Practical parameter guidelines.}
Given a target privacy budget $\epsilon$, one should set $p^{*} = \frac{1}{1+e^{\epsilon}}$ (Prop.~\ref{prop:optimalP}).
Since neither $\epsilon$ nor~$r$ depends on~$m$ (cf.~Thm.~\ref{thm:pTwoBitDifferentialPrivacyResult} and Sect.~\ref{ssec:counter-misprediction}), the update probability $m$ can be tuned independently for convergence speed; $m = 0.5$~\cite{JCST2021zhao} provides a balanced default.
For example, $\epsilon = \ln 9 \approx 2.20$ with $p = 0.1$ and $m = 0.5$ yields at most $1.2$ percentage points of additional misprediction (at $t = 0.9$) while bounding the adversary's success probability by $\frac{e^{\epsilon}}{1+e^{\epsilon}} = 0.9$.

\subsection{General \kbit PSCs}
\label{sec:general}

Our methodology generalizes to arbitrary \kbit designs; all formal definitions and proofs are in Appendix~\ref{appendix-formal}.

We first consider a standard \kbit PSC, obtained by extending the probabilistic transition function to a \kbit linear chain with $2^{k-1}-1$ $\weaktaken$ and $\weaknotaken$ states. 
Our analysis reveals that this naive extension remains vulnerable to the Prime+Probe attack. 
Specifically, by generalizing the transition matrix analysis of Sect.~\ref{ssec:PSCattackModel} to the \kbit case (see Appendix~\ref{appendix-formal} for the full proof), we can show that the comparison $\Pr[\attackout[\taken]=c] \bowtie \Pr[\attackout[\notaken]=c]$ reduces to $c \bowtie (2^{k-1}-1)/m$ for any comparison operator $\bowtie$. 
Hence, an attacker can distinguish branch outcomes with high confidence once the probe count $c$ significantly exceeds $(2^{k-1}-1)/m$, using the same threshold-based strategy as in the 2-bit case. 
This confirms that merely adding probabilistic updates without targeted defense transitions is insufficient for higher-order counters.

\begin{figure}[t]
\centering
    \resizebox{\linewidth}{!}{
    \begin{tikzpicture}
        \path[use as bounding box] (-2.1,1.6) rectangle (11.7,-3.6);
        \node[Takenode,fixedsize] (ST) at (0,0) {$\strongtaken$};
        \node[Takenode,fixedsize] (WT1) at ($(ST) + (3,0)$) {$\weaktaken_{1}$};
        \node[fixedsize] (WTd) at ($(WT1) + (3,0)$) {$\ldots$};
        \node[Takenode,fixedsize] (WTn) at ($(WTd) + (3,0)$) {$\weaktaken_{n{-}1}$};
        \node[Notakenode,fixedsize] (WNn) at ($(ST) + (0,-2)$) {$\weaknotaken_{n{-}1}$};
        \node[fixedsize] (WNd) at ($(WNn) + (3,0)$) {$\ldots$};
        \node[Notakenode,fixedsize] (WN1) at ($(WNd) + (3,0)$) {$\weaknotaken_{1}$};
        \node[Notakenode,fixedsize] (SN) at ($(WN1) + (3,0)$) {$\strongnotaken$};
                
        \draw (ST) to[bend left=20] node[above] {$\notaken/m(1{-}p)$} node[below] {$\taken/mp$} (WT1);
        \draw (ST) to[in=200,out=160,looseness=6] node[pos=1,anchor=north east] {$\taken/1{-}mp$} (ST);
        \draw (ST) to[in=110,out=70,looseness=5] node[above] {$\notaken/1{-}m(1{-}p)$} (ST);
        
        \draw (WT1) to[bend left=20] node[below] {$\taken/m$} (ST);
        \draw (WT1) to[bend left=20] node[above] {$\notaken/m$} (WTd);
        \draw (WT1) to[in=290,out=250,looseness=5] node[below] {$\taken/1{-}m$} (WT1);
        \draw (WT1) to[in=70,out=110,looseness=5] node[above] {$\notaken/1{-}m$} (WT1);

        \draw (WTd) to[bend left=20] node[below] {$\taken/m$} (WT1);
        \draw (WTd) to[bend left=20] node[above] {$\notaken/m$} (WTn);

        \draw (WTn) to[bend left=15] node[right] {$\notaken/m$} (SN);
        \draw (WTn) to[bend left=20] node[below] {$\taken/m$} (WTd);
        \draw (WTn) to[in=-20,out=20,looseness=6] node[pos=1,anchor=north west] {$\taken/1{-}m$} (WTn);
        \draw (WTn) to[in=70,out=110,looseness=5] node[above] {$\notaken/1{-}m$} (WTn);
        
        \draw (SN) to[bend left=20] node[below] {$\taken/m(1{-}p)$} node[above] {$\notaken/mp$}(WN1);
        \draw (SN) to[in=20,out=-20,looseness=6] node[pos=1,anchor=south west] {$\taken/1{-}m(1{-}p)$} (SN);
        \draw (SN) to[in=290,out=250,looseness=5] node[below] {$\notaken/1{-}mp$} (SN);

        \draw (WN1) to[bend left=20] node[above] {$\notaken/m$} (SN);
        \draw (WN1) to[bend left=20] node[below] {$\taken/m$} (WNd);
        \draw (WN1) to[in=290,out=250,looseness=5] node[below] {$\notaken/1{-}m$} (WN1);
        \draw (WN1) to[in=70,out=110,looseness=5] node[above] {$\taken/1{-}m$} (WN1);
        
        \draw (WNd) to[bend left=20] node[above] {$\notaken/m$} (WN1);
        \draw (WNd) to[bend left=20] node[below] {$\taken/m$} (WNn);

        \draw (WNn) to[bend left=12.5] node[left] {$\taken/m$} (ST);
        \draw (WNn) to[bend left=20] node[above] {$\notaken/m$} (WNd);
        \draw (WNn) to[in=160,out=200,looseness=6] node[pos=1,anchor=south east] {$\taken/1{-}m$} (WNn);
        \draw (WNn) to[in=250,out=290,looseness=5] node[below] {$\notaken/1{-}m$} (WNn);
    \end{tikzpicture}
    }
    \caption{The probabilistic Moore machine for the enhanced \kbit PSC; $n = 2^{k-1}$}
    \label{fig:enhancedKbitPSC}
\end{figure}

To address this, we propose the enhanced \kbit PSC, injecting the defense parameter $p$ exclusively into transitions from $\strongtaken$ and $\strongnotaken$ (Fig.~\ref{fig:enhancedKbitPSC}); intermediate states retain their original update $m$.
This generalized design, shown in Fig.~\ref{fig:enhancedKbitPSC}, preserves the rigorous security guarantees of the 2-bit version:
\begin{restatable}{theorem}{ppnbitDifferentialPrivacyResult}
\label{thm:ppnbitDifferentialPrivacyResult}
    Given an enhanced \kbit PSC and $p \in (0,1)$, if $p \geq \frac{1}{2}$, then it is $(\ln\frac{p}{1 - p}, 0)$-differentially private;
    otherwise, it is $(\ln\frac{1 - p}{p}, 0)$-differentially private.   
\end{restatable}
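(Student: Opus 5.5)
The plan is to reduce the $k$-bit case to the same structure as the 2-bit case. The point is that the victim's single step only decides whether Phase~3 starts in $\strongtaken$ or in $\weaktaken_{1}$. Everything after that step is the same Markov chain for both values of $v$.

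First I would build the attack chains, generalizing Fig.~\ref{fig:2bitEnhancedPSCprobMooreMachines} to the chain of Fig.~\ref{fig:enhancedKbitPSC}. After priming, the counter is in $\strongtaken$. With $q = 1-p$, the victim's step gives:
\[
\Pr[\weaktaken_{1} \mid v = \taken] = mp, \qquad \Pr[\weaktaken_{1} \mid v = \notaken] = mq ,
\]
and in each case the counter stays in $\strongtaken$ with the complementary probability, $1-mp$ or $1-mq$. During Phase~3 every input is $\notaken$. So $\strongtaken$ moves to $\weaktaken_{1}$ with probability $mq$, each $\weaktaken_{i}$ advances with probability $m$, and $\weaktaken_{n-1}$ moves to $\strongnotaken$. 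Since Alg.~\ref{alg:cutoff} returns at the first correct prediction, the output $c$ is the first hitting time of $\strongnotaken$. The while-loop bound only truncates this hitting time at $n$, so the event $\mathit{out}=n$ collects the tail.

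Next I would define two nonnegative functions of the output value, for every $c$ including the truncated value:
\begin{itemize}
\item $G(c)$, the probability that $\mathit{out}=c$ when Phase~3 starts in $\strongtaken$;
\item $H(c)$, the same probability when Phase~3 starts in $\weaktaken_{1}$.
\end{itemize}
$H$ is an explicit negative binomial. I will not need closed forms, though; only $G,H\ge 0$ matters. The Markov property then gives
\[
\Pr[\mathit{out}=c\mid v=\taken]=(1-mp)G(c)+mp\,H(c),\qquad
\Pr[\mathit{out}=c\mid v=\notaken]=(1-mq)G(c)+mq\,H(c).
\]
A ratio of two such nonnegative combinations is at most the larger of the coefficientwise ratios. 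Therefore
\[
\Pr[\mathit{out}=c\mid \taken]\le \max\Bigl(\tfrac{1-mp}{1-mq},\tfrac{p}{q}\Bigr)\Pr[\mathit{out}=c\mid\notaken],
\]
and symmetrically with the ratios inverted. This holds for every $c$ and needs no binomial-theorem manipulation.

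The last step checks that all four coefficient ratios are at most $e^{\epsilon}=\max(p/q,q/p)$. Take $p\ge\frac12$ (the case $p<\frac12$ is symmetric, swapping $p$ and $q$).
\begin{itemize}
\item $\frac{1-mp}{1-mq}\le 1\le\frac pq$, because $p\ge q$.
\item $\frac qp\le\frac pq$ holds directly.
\item For the remaining ratio, $1-mq\le\frac pq(1-mp)$ is equivalent to $q-mq^{2}\le p-mp^{2}$. This is $(q-p)\le m(q-p)(q+p)=m(q-p)$, i.e.\ $(q-p)(1-m)\le 0$, which is true.
\end{itemize}
This yields $(\ln\frac{p}{1-p},0)$-DP, and $(\ln\frac{1-p}{p},0)$-DP when $p<\frac12$.

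The main obstacle I expect is bookkeeping rather than inequalities. I need to make sure the decomposition is faithful for every $k$: the indexing of $c$ relative to the victim step, and the truncated output $c=n$, must both be expressible with the same $G,H$. I also need to check that no path reaches $\strongnotaken$ except through $\weaktaken_{1}$, so that the victim's step really affects only the two initial weights.
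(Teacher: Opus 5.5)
Your proposal is correct and follows the paper's proof. The paper likewise conditions on the state reached by the victim's step and derives $\Pr[\mathit{out}=c\mid \taken] = mp\,a_c + (1-mp)\Pr[\mathit{out}=c-1\mid\notaken]$ and $\Pr[\mathit{out}=c\mid\notaken] = m(1-p)\,a_c + (1-m+mp)\Pr[\mathit{out}=c-1\mid\notaken]$. This is your decomposition with $H(c)=a_c$ and $G(c)=\Pr[\mathit{out}=c-1\mid\notaken]$.

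The paper then only asserts the bound, and your coefficientwise comparison (in particular the check $(q-p)(1-m)\le 0$) supplies that final step. It also makes the independence from $m$ visible directly, without the binomial-theorem manipulation the paper's main text alludes to.
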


For the general enhanced \kbit PSC, we can prove that its misprediction rate is independent of the update probability $m$.
\begin{restatable}{theorem}{mispredictionRateIndependenceOfThreshold}
\label{thm:mispredictionRateIndependenceOfThreshold}
   Given an enhanced \kbit PSC, its misprediction rate is independent of $m$.
\end{restatable}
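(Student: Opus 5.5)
The plan is to show that the transition matrix of the enhanced \kbit PSC under a fixed branch distribution has the form $\tranmatrix_m = (1-m)I + m\,\tranmatrix_1$, where $\tranmatrix_1$ does not depend on $m$. From this it follows that every stationary distribution of $\tranmatrix_1$ is stationary for $\tranmatrix_m$ and conversely, so the misprediction rate is independent of $m$.

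\textbf{Step 1: decompose the transition matrix.} Fix the taken probability $t\in(0,1)$ and $s=1-t$, and write $\tranmatrix_m(u,w)=\sum_{\sigma}\mtransitions(u,\sigma,w)P_{u,\sigma}$ with $P_{u,\taken}=t$ and $P_{u,\notaken}=s$ for every state $u$. Inspecting Fig.~\ref{fig:enhancedKbitPSC}, every non-self-loop transition probability has the form $m\cdot\beta$, where $\beta\in\{1,p,1-p\}$ does not depend on $m$. Examples are $\weaktaken_i\to\weaktaken_{i+1}$ under $\notaken$ with $m$, $\strongtaken\to\weaktaken_1$ under $\taken$ with $mp$ and under $\notaken$ with $m(1-p)$, and symmetrically at $\strongnotaken$. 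Each self-loop receives the complementary mass, $1-m\beta$. Hence for $u\neq w$ we have $\tranmatrix_m(u,w)=m\,G(u,w)$ with $G$ independent of $m$, and $\tranmatrix_m(u,u)=1-m\sum_{w\neq u}G(u,w)$. Setting $\tranmatrix_1=I+(\tranmatrix_m-I)/m$ gives a stochastic matrix independent of $m$, which is exactly the matrix at $m=1$. Therefore $\tranmatrix_m=(1-m)I+m\tranmatrix_1$.

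\textbf{Step 2: stationary distributions.} For $m\in(0,1]$, the equation $\mu\tranmatrix_m=\mu$ is equivalent to $m\,\mu(\tranmatrix_1-I)=0$, which is in turn equivalent to $\mu\tranmatrix_1=\mu$. So the two matrices have the same set of stationary distributions. As stated earlier for the 2-bit case, for $m>0$ and $t\in(0,1)$ the chain is irreducible and aperiodic. The same holds for the \kbit chain, since it has self-loops whenever $m<1$ and the states form a cycle traversable in both directions. Irreducibility depends only on the support of the off-diagonal entries, i.e.\ on $G$, and is the same for all $m>0$. One caveat: at $p=0$, $\strongtaken$ has no exit under $\taken$, but it still exits under $\notaken$, so irreducibility holds for $p\in[0,1]$ with $t\in(0,1)$. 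It follows that the unique stationary distribution $\mu$ is the same for every $m\in(0,1]$.

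\textbf{Step 3: the misprediction rate.} The rate is $r=\sum_{u}\mu(u)\cdot\Pr[\text{input}\neq\Outfunc(u)]$, which equals $s\sum_{u:\Outfunc(u)=\taken}\mu(u)+t\sum_{u:\Outfunc(u)=\notaken}\mu(u)$. This depends only on $\mu$, $t$, and the output function, none of which involve $m$, so $r$ is independent of $m$.

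The main obstacle is Step 1: checking carefully that all $2^k$ states, including both strong states with their $p$-split transitions, fit the uniform form "off-diagonal $=m\times(\text{$m$-free})$". Once that is verified, the rest is immediate. I would tabulate the out-transitions of $\strongtaken$, a generic $\weaktaken_i$, $\weaktaken_{n-1}$, $\weaknotaken_{n-1}$, and $\strongnotaken$, reading them off the figure, and confirm each row sums to one.
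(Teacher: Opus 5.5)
Your proposal is correct and is essentially the paper's own argument: write the transition matrix as $\tranmatrix = mB + (1-m)I$ with $B$ (your $\tranmatrix_1$) independent of $m$, reduce $\mu\tranmatrix=\mu$ to $\mu(B-I)=0$, and use irreducibility to get a unique stationary distribution, which then fixes the misprediction rate. One harmless slip: the cycle $\strongtaken\to\weaktaken_1\to\cdots\to\weaktaken_{n-1}\to\strongnotaken\to\weaknotaken_1\to\cdots\to\weaknotaken_{n-1}\to\strongtaken$ is not traversable in both directions (e.g.\ $\strongnotaken$ never moves to $\weaktaken_{n-1}$), but the forward direction alone already gives irreducibility, and irreducibility is all that uniqueness needs.
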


The proof (Appendix~\ref{appendix-formal}) relies on the decomposition $M = mB + (1-m)I$, where $B$ is independent of $m$; solving $\mu M = \mu$ then reduces to $\mu(B - I) = 0$, which is independent of $m$.

In summary, extending the PSC model to the \kbit case not only enhances its flexibility but also maintains its security and performance characteristics.
The fact that both the DP guarantee and the misprediction rate are independent of the counter width $k$ (in the sense that the same $\epsilon$-$p$ relationship holds) further underscores the generality of the proposed design.

\section{Experimental Evaluation}
\label{sec:experiments}

In this section, we evaluate the performance of PSCs under various parameters using real-world programs.
For comparison, we also analyze the experimental results of deterministic saturating counters and original PSCs. 
We utilize the clock-level Gem5 simulator~\cite{binkert2011GEM5} to simulate an out-of-order execution processor based on the latest Intel Sunny Cove core model~\cite{Wikichip}.
The PSCs model is implemented within the classic Tournament branch predictor architecture~\cite{kessler1999the}, with probabilistic mechanisms realized through Register Transfer Level code.
The formal model maps to the implementation as follows: each Moore machine state corresponds to a saturating counter value; $m$ and $p$ are realized by linear-feedback shift registers. Experiments validate \emph{utility} (prediction accuracy) under realistic workloads; privacy is established by the formal theorems of Sect.~\ref{sec:enhancedPSC}, not by empirical observation.
To assess performance, we employ the SPEC CPU 2017 benchmark suite~\cite{bucek2018Spec}. 
After warming up with 100 million instructions, the simulator executes further 100 million instructions in clock-precise mode. 
We measure \emph{mispredictions per kilo-instruction (MPKI)} to gauge branch prediction accuracy and \emph{instructions per cycle (IPC)} to evaluate overall performance.

\subsection{Performance evaluation}

\begin{figure}[t]
    \centering
    \includegraphics[width=1\linewidth]{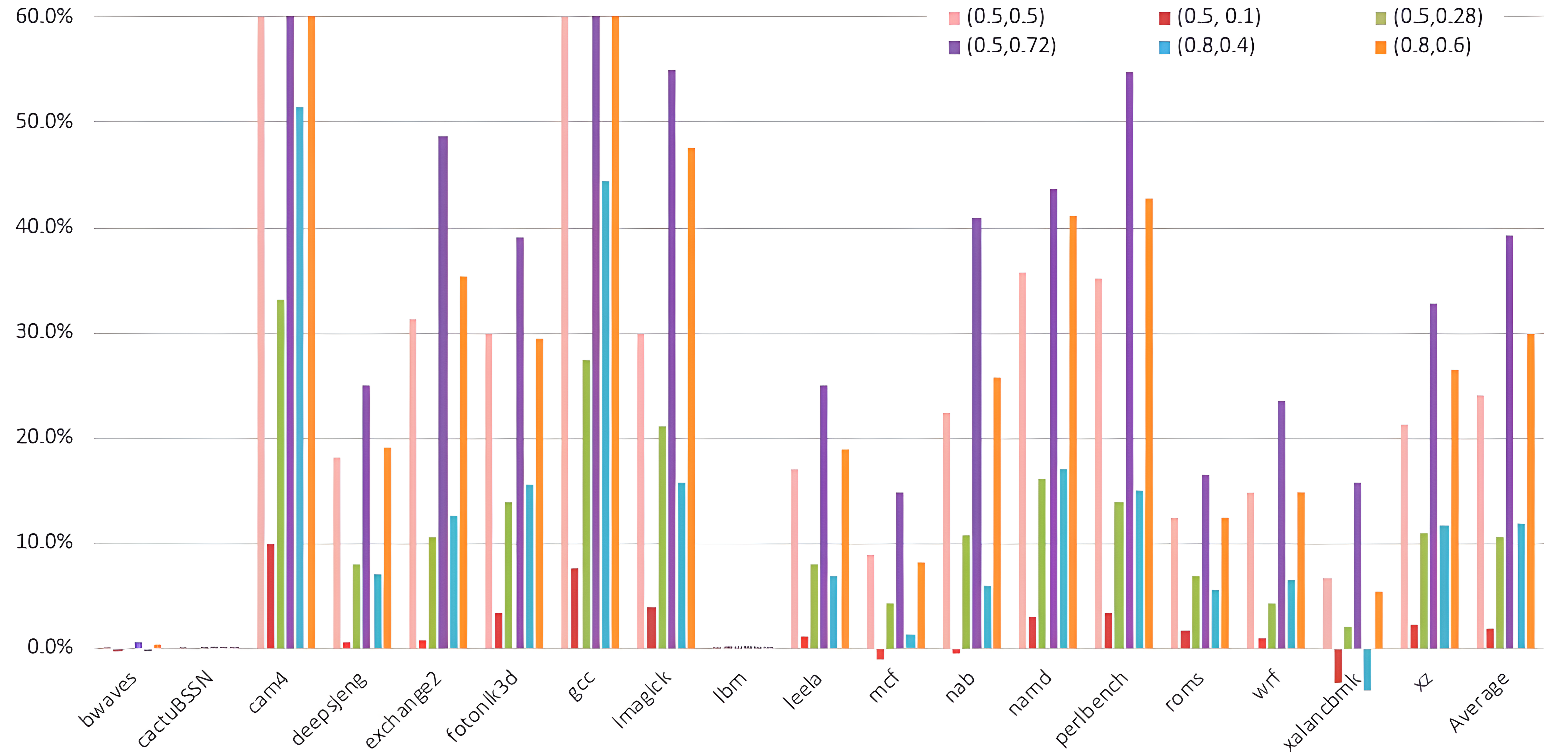}
    \caption{The normalised performance overhead for PSCs under different settings of parameters $(m,p)$, where the baseline is the deterministic saturating counter}
    \label{fig:overhead}
\end{figure}
Figure~\ref{fig:overhead} presents the performance evaluation results.
We measure Enhanced PSC overhead relative to the deterministic baseline across six configurations grouped by DP guarantee:
(i) $(m,p)=(0.5, 0.5)$, yielding perfect $(0, 0)$-DP;
(ii) $(0.5, 0.1)$, satisfying $(\ln{9}, 0)$-DP; and
(iii) $\{(0.5, 0.28), (0.5, 0.72), (0.8, 0.4), (0.8, 0.6)\}$, all providing $(0.94, 0)$-DP.
These budgets align with real-world deployments~\cite{tang2017privacylossapplesimplementation,10.1145/3183713.3196906,10.1145/2660267.2660348}.
The $(m, p) = (0.5, 0.1)$ setting gives a good balance, with $1.8\%$ average normalized overhead under $(\ln{9}, 0)$-DP over all benchmarks except the outlier \texttt{cam4}.
The perfect-privacy endpoint $(0.5, 0.5)$ incurs $24.1\%$ average overhead because it drives the stationary misprediction rate toward $50\%$; it marks the privacy-utility frontier rather than a recommended operating point.
Among the $(0.94,0)$-DP configurations, overhead increases with $p$, confirming that smaller $p$ with larger $m$ yields better performance.
Some stochastic configurations even outperform the deterministic baseline (e.g., a $4.1\%$ gain on \texttt{xalancbmk} for $(0.8,0.4)$), consistent with prior observations that randomness can prevent predictors from getting stuck in suboptimal states~\cite{JCST2021zhao}.

\subsection{Comparing theoretical and experimental misprediction rates}
\label{sec:experiment2}

\begin{table}[t]
    \caption{The experimental and theoretical misprediction rate on the MergeSort algorithm. PSC($m$,$p$) denotes the PSC with parameters $m$ and $p$.}
    \label{tab:prob}
    \centering
    \setlength{\tabcolsep}{6pt}
    \begin{tabular}{@{\extracolsep{4pt}}cclllllll@{}}
        \toprule
        \multirow{2}{*}{Data} &
        \multirow{2}{*}{Branch} &
        \multicolumn{1}{c}{\multirow{2}{*}{$t$}} &
        \multicolumn{2}{c}{SC} &
        \multicolumn{2}{c}{PSC(0.5, 0.5)} & \multicolumn{2}{c}{PSC(0.8, 0.4)} \\
        \cmidrule{4-5} \cmidrule{6-7} \cmidrule{8-9}
        & & & \multicolumn{1}{c}{$P_{\mathit{th}}$} & \multicolumn{1}{c}{$P_{\mathit{exp}}$} & 
        \multicolumn{1}{c}{$P_{\mathit{th}}$} & \multicolumn{1}{c}{$P_{\mathit{exp}}$} &
        \multicolumn{1}{c}{$P_{\mathit{th}}$} & \multicolumn{1}{c}{$P_{\mathit{exp}}$} \\
        \midrule
        \multirow{4}{*}{Uniform} 
        & 1st & 0.939 & 0.068 & 0.061 & 0.114 & 0.094 &  0.104 & 0.089 \\
        & 2nd & 0.495 & 0.500 & 0.510 & 0.500 & 0.504 &  0.500 & 0.506 \\
        & 3rd & 0.355 & 0.433 & 0.406 & 0.458 & 0.471 &  0.453 & 0.469 \\
        & 4th & 0.437 & 0.487 & 0.544 & 0.492 & 0.514 &  0.491 & 0.525 \\
        \midrule
        \multirow{4}{*}{Sorted} 
        & 1st & 0.891 & 0.128 & 0.109 & 0.194 & 0.154 & 0.179 & 0.149 \\
        & 2nd & 1 & 0 & 0 & 0 & 0 & 0 & 0 \\
        & 3rd & 0 & 0 & 0 & 0 & 0 & 0 & 0 \\
        & 4th & 0.895 & 0.123 & 0.105 & 0.188 & 0.158 & 0.173 & 0.154 \\
        \bottomrule
    \end{tabular}%
\end{table}

To validate the theoretical rates from Sect.~\ref{ssec:counter-misprediction}, we follow~\cite{elkhouly20152bit} and apply PSCs to MergeSort (Appendix~\ref{sec:mergesort}) on uniform and pre-sorted datasets of $10^5$ integers.
We compare $P_{\mathit{th}}$ with the observed $P_{\mathit{exp}}$ across four sensitive branches; Table~\ref{tab:prob} shows close agreement, validating both the stationary analysis and the assumption that this workload reaches stationarity.
Consistent with Sect.~\ref{ssec:counter-misprediction}, PSCs with $p > 0$ have higher misprediction rates than deterministic SCs when $t$ deviates from $0.5$, so $p$ should be minimized subject to the target DP budget.
For shorter executions, the stationary rate should be interpreted as an asymptotic long-run reference.

\section{Conclusion}   \label{sec:conclusion}
We presented a formal DP analysis of PSCs under Prime+Probe attacks by modeling counters and attacks as probabilistic Moore machines.
The analysis yields optimal attacks against existing PSCs and motivates an enhanced PSC whose parameter $p$ enforces pure $\epsilon$-DP while $m$ controls convergence speed.
We derived stationary misprediction rates, established the privacy-utility trade-off, and showed that targeted randomization improves utility over randomized response; experiments corroborate the stationary analysis.
Our formal guarantees hold for the PSC primitive under the Prime+Probe attack model.
Future work includes richer predictors such as TAGE, automated synthesis via PRISM~\cite{KNP:11:PRISM}, and DP composition under repeated attacks.

\subsubsection*{Acknowledgements.}
Work supported in part by 
National Key R\&D Program of China No.2025YFE0220300,
the Beijing Natural Science Foundation Project No.\@ IS26039,
the CAS Project for Young Scientists in Basic Research under grant No.\@ YSBR-040,
NSFC under grant No.\@ 61836005,
the CAS Pioneer Hundred Talents Program,
and
the ISCAS New Cultivation Project ISCAS-PYFX-202201.
\newline\protect\includegraphics[height=8pt]{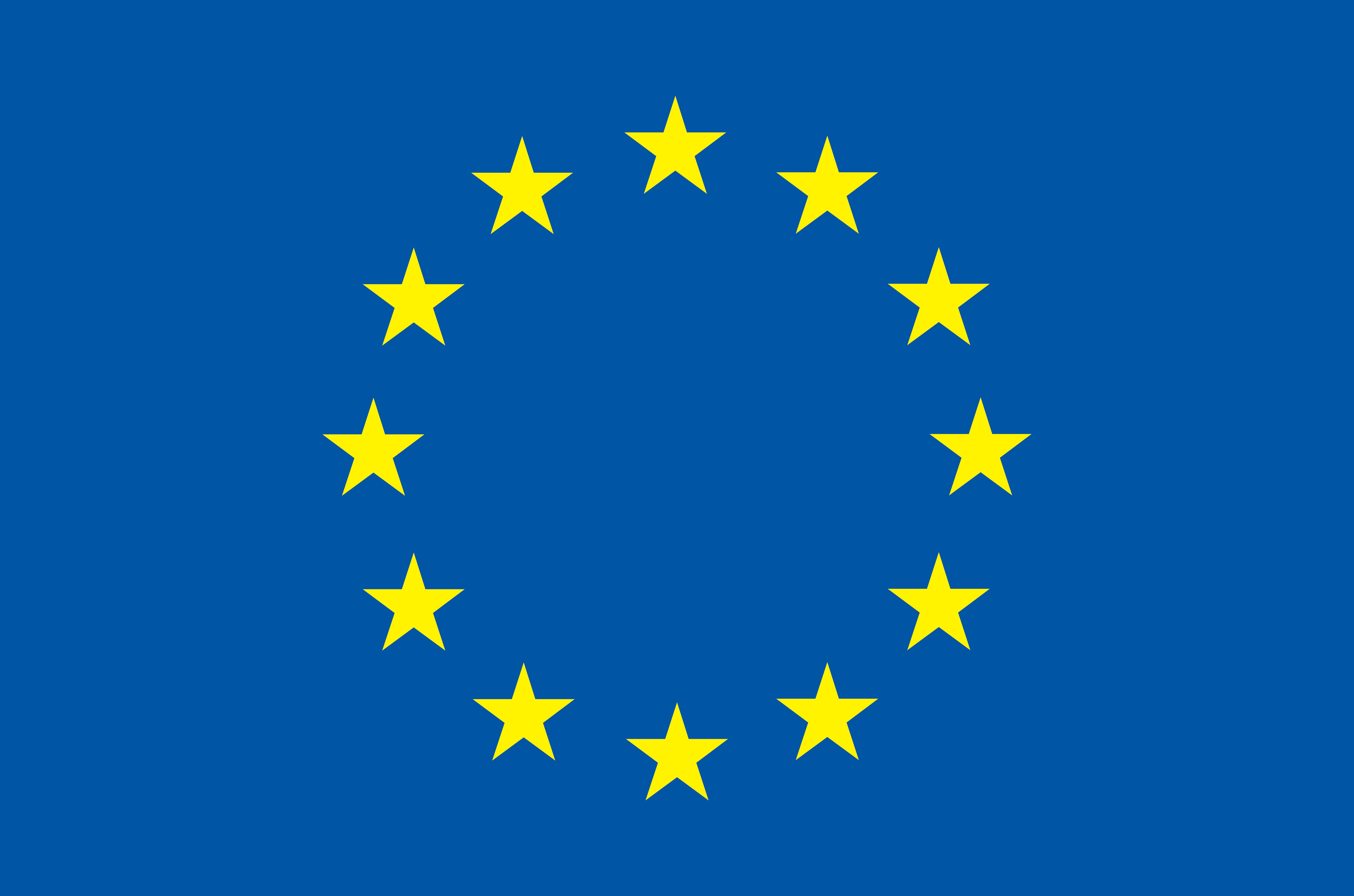} This project is part of the European Union's Horizon 2020 research and innovation programme under the Marie Sk\l{}odowska-Curie grant no.\@ 101008233.

\bibliographystyle{splncs04}
\bibliography{refs}

\clearpage
\appendix

\section{Calculation of the Misprediction Rate in the Stationary Distribution} 
\label{sec:rdetail}

Let $s = 1-t$ and $q = 1 - p$; 
we write the transition matrix $M$ of the PSC in Fig.~\ref{fig:enhancedPSC}, where the states are sorted in the order $\strongtaken$, $\weaktaken$, $\weaknotaken$, and $\strongnotaken$:
\[
\setlength{\arraycolsep}{5pt}
    M = \begin{bmatrix}
     1 - m + m(qt + ps) & m(qs + pt) & 0 & 0\\
     mt & 1 - m & 0 & ms\\
     mt & 0 & 1 - m & ms\\
     0 & 0 & m(qt + ps) & 1 - m + m(qs + pt)
    \end{bmatrix}
\]
When $m=0$, the transition matrix is the unit matrix, which implies that any distribution can be a stationary distribution, and in this case the saturating counter degenerates into a static branch predictor and stays in a single state forever, so we assume that $m \ne 0$. 
For programs that use saturating counters for branch prediction, the counters will converge to the stationary distribution after a large enough number of runs. 
Let $\mu = [a,b,c,d]$ denote the stationary distribution of $M$, where $a + b + c + d = 1$. 
The stationary distribution of $M$ can be computed by solving $\mu \cdot M = \mu$, whose solution is obtained as 
\begin{align*}
    \mu = \bigg [
    &\frac{t(qt + ps)}{t(qt + ps)(1 + qs + pt) + s(qs + pt)(1 + qt + ps)}, \\ 
    &\frac{t(qt + ps)(qs + pt)}{t(qt + ps)(1 + qs+pt) + s(qs + pt)(1 + qt + ps)}, \\ 
    &\frac{s(qt + ps)(qs + pt)}{t(qt + ps)(1 + qs + pt) + s(qs + pt)(1 + qt + ps)}, \\ 
    &\frac{s(qs + pt)}{t(qt + ps)(1 + qs + pt) + s(qs + pt)(1 + qt + ps)}
    \bigg].
\end{align*}
The misprediction rate $r$ can be obtained by calculating the sum of the probability of executing $\notaken$ in $\strongtaken$ and $\weaktaken$ and that of executing $\taken$ in $\strongnotaken$ and $\weaknotaken$, i.e., $r = (a + b) \cdot s + (c + d) \cdot t$. 
Then we have 
\[
    r = \frac{ts(qt + ps)(1 + qs + pt) + ts(qs + pt)(1 + qt + ps)}{t(qt + ps)(1 + qs + pt) + s(qs + pt)(1 + qt + ps)}.
\]

\section{The MergeSort Algorithm in Sect.~\ref{sec:experiment2}} \label{sec:mergesort}

The MergeSort algorithm used in Sect.~\ref{sec:experiment2} is shown as Alg.~\ref{algorithm:merge}. 
It is easy to see that there are four branches, namely, Line~\ref{branch1},~\ref{branch2},~\ref{branch3},~and~\ref{branch4} in Alg.~\ref{algorithm:merge}, that depend on the private information contained in the $\mathit{list}$ array.

\begin{algorithm}
  \caption{MergeSort}
  \label{algorithm:merge}
  \begin{algorithmic}[1]
  \Require{The integer array $\mathit{list}$, the left index $\mathit{low}$, the right index $\mathit{high}$}
  \Ensure{$\mathit{list}[\mathit{low}, \mathit{high})$ is sorted after running the algorithm}
    \Procedure{MergeSort}{$\mathit{list}, \mathit{low}, \mathit{high}$}
    \If{$\mathit{low} + 1 \geq \mathit{high}$}
    \Return
    \EndIf
    \State{$m \gets (\mathit{high} - \mathit{low})/2$}
    \State{\Call{MergeSort}{$\mathit{list}, \mathit{low}, \mathit{low} + m$}}
    \State{\Call{MergeSort}{$\mathit{list}, \mathit{low} + m, \mathit{high}$}}
    \State{$\mathit{leftList} \gets \mathit{list}[\mathit{low} : \mathit{low} + m]$}
    \Comment{Copy the list}
    \State{$\mathit{rightList} \gets \mathit{list}[\mathit{low} + m : \mathit{high}]$}
    \State{$i \gets 0$, $j \gets 0$, $k \gets 0$}
    \While {$i \neq \mathit{leftList}.\mathrm{size}()$ \& $j \neq \mathit{rightList}.\mathrm{size}()$}\label{branch1}
        \If {$\mathit{leftList}[i] \leq \mathit{rightList}[j]$}\label{branch2}
            \State{ $\mathit{list}[\mathit{low} + k] \gets \mathit{leftList}[i ++]$}
        \Else
            \State{$\mathit{list}[\mathit{low} + k] \gets \mathit{rightList}[j ++]$}
        \EndIf
        \State{$k \gets k+1$}
    \EndWhile
    \While {$i \neq \mathit{leftList}.\mathrm{size}()$}\label{branch3}
        \State{$\mathit{list}[\mathit{low} + k] \gets \mathit{leftList}[i ++]$}
        \State{$k \gets k+1$}
    \EndWhile
    \While {$j \neq \mathit{rightList}.\mathrm{size}()$}\label{branch4}
        \State{$\mathit{list}[\mathit{low} + k] \gets \mathit{rightList}[j ++]$}
        \State{$k \gets k+1$}
    \EndWhile
    \State \Return
    \EndProcedure
  \end{algorithmic}
\end{algorithm}

\section{Formal Modeling of \kbit Saturating Counters and Proofs}
\label{appendix-formal}
In this section, we provide the formal definition of the \kbit PSCs introduced in Sect.~\ref{sec:general}, along with the proofs of the theorems stated in the paper.

\begin{definition}
    Given $k \in \naturals_{\geq 2}$, let $n = 2^{k-1}$.
    A \kbit SC is a Moore machine $M = (S, \minit, \MooreTransition, \Outfunc)$, 
    where $S = \setnocond{\strongtaken, \strongnotaken} \cup \setcond{\weaktaken_{j}, \weaknotaken_{j}}{1 \leq j < n}$, $\MooreTransition$ is defined as 
    \begin{align*}
        &\MooreTransition(\strongnotaken, \notaken) = \strongnotaken, &&\MooreTransition(\strongnotaken, \taken) = \weaknotaken_{1}, \\
        &\MooreTransition(\weaknotaken_{1}, \notaken) = \strongnotaken, &&\MooreTransition(\weaknotaken_{i}, \taken) = \weaknotaken_{i+1}, \\
        &\MooreTransition(\weaknotaken_{i+1}, \notaken) = \weaknotaken_{i}, &&\MooreTransition(\weaknotaken_{n-1}, \taken) = \strongtaken, \\
        &\MooreTransition(\strongtaken, \notaken) = \weaktaken_{1}, &&\MooreTransition(\strongtaken, \taken) = \strongtaken, \\
        &\MooreTransition(\weaktaken_{i}, \notaken) = \weaktaken_{i+1}, &&\MooreTransition(\weaktaken_{i+1}, \taken) = \weaknotaken_{i}, \\
        &\MooreTransition(\weaktaken_{n-1}, \notaken) = \strongnotaken, &&\MooreTransition(\weaktaken_{1}, \taken) = \strongtaken,    \end{align*}
    where $1 \leq i < n-1$, and 
    $\Outfunc$ is defined as $\Outfunc(\strongtaken) = \Outfunc(\weaktaken_{j}) = \taken$ and $\Outfunc(\strongnotaken) = \Outfunc(\weaknotaken_{j}) = \notaken$ for each $1 \leq j < n$.
\end{definition}
A pictorial representation of a \kbit saturating counter is given in Fig.~\ref{fig:kbitSaturatingCounter}.

%
%
%
%

\begin{theorem} 
    The Prime+Probe attack on a \kbit saturating counter always succeeds, when Alg.~\ref{alg:cutoff} is called with input $n^{*} > 2^{k-1}$.
\end{theorem}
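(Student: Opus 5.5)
The proof is a direct simulation of the deterministic Moore machine of the \kbit SC through the three phases of Alg.~\ref{alg:cutoff}. Once the probe count $c$ is computed for each victim direction $v$, we show that the two values differ. The attacker's decision rule (output $\taken$ iff $c$ equals the larger value) is then always correct.

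\textbf{Phase~1.} We prove by induction on $j$ that after $j$ executions with input $\taken$, starting from an arbitrary state, the counter is $j$ steps closer to $\strongtaken$ along the chain, or already in $\strongtaken$. The chain runs
\[
\strongnotaken \to \weaknotaken_1 \to \cdots \to \weaknotaken_{n-1} \to \strongtaken,
\qquad
\weaktaken_i \to \weaktaken_{i-1} \to \cdots \to \weaktaken_1 \to \strongtaken .
\]
Once reached, $\strongtaken$ is a $\taken$-self-loop. The longest distance to $\strongtaken$ is from $\strongnotaken$ and equals $n = 2^{k-1}$. Hence, for $n^{*} > 2^{k-1}$ (in fact $\geq$ suffices), the state after line~\ref{alg:cutoff:prime} is $\strongtaken$ regardless of the unknown initial state. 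This is the only place the hypothesis on $n^{*}$ is used.

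\textbf{Phase~2.} The victim's input moves the counter deterministically. If $v=\taken$ it stays in $\strongtaken$; if $v=\notaken$ it moves to $\weaktaken_1$.

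\textbf{Phase~3.} Under repeated $\notaken$ inputs, the path is
\[
\strongtaken \to \weaktaken_1 \to \cdots \to \weaktaken_{n-1} \to \strongnotaken .
\]
The prediction is correct exactly when the current state outputs $\notaken$. Along this path that happens only at $\strongnotaken$, since every $\weaktaken_j$ and $\strongtaken$ output $\taken$. The first correct prediction therefore occurs after the counter has been driven into $\strongnotaken$. We count the mispredictions on this path, using the same convention as the paper's discussion in the preliminaries: if $c = 2^{k-1}$ the victim took the branch, and if $c < 2^{k-1}$ it did not. This yields $c = n$ for $v=\taken$ and $c = n-1$ for $v=\notaken$.

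The paper's $\Pr[\mathit{out}=c\mid v]=\transpose{\indvec{\strongtaken}}M^{c+1}\indvec{\strongnotaken}$ convention shifts both values by one. Either way, the two counts differ by exactly one. We must also check that each count lies within the loop bound $c \le n^{*}$, which holds because $n^{*} > 2^{k-1} = n$, so the loop never exits without returning.

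\textbf{Conclusion.} Both cases are deterministic, so
\[
\Pr[\mathit{out}=n \mid v=\taken]=1, \qquad \Pr[\mathit{out}=n\mid v=\notaken]=0 .
\]
The attacker who guesses $\taken$ iff $c=n$ succeeds with probability $1$. Equivalently, Def.~\ref{def:pscdp} fails for every $\epsilon$ and every $\delta<1$.

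The main obstacle is the off-by-one bookkeeping in Phase~3: whether the step into $\strongnotaken$ is counted, and aligning this with both Alg.~\ref{alg:cutoff} and the $M^{c+1}$ convention. The argument only needs the two counts to be distinct, and that is robust to the convention, so I would fix one convention explicitly and state the counts under it.
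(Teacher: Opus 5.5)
Your proposal is correct and follows the paper's own proof. You simulate the deterministic counter through the three phases, show that Phase~1 ends in $\strongtaken$ because the farthest state $\strongnotaken$ is $2^{k-1}$ $\taken$-steps away, and read off $c = 2^{k-1}$ for $v=\taken$ versus $c = 2^{k-1}-1$ for $v=\notaken$. Your aside that the $M^{c+1}$ convention shifts both counts is unnecessary and likely inaccurate: the extra step there is the victim's transition out of $\strongtaken'$, so it gives the same $n$ versus $n-1$.
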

\begin{proof}
    Let $n^{*} \in \naturals$ satisfy $n^{*} > 2^{k-1}$.
    Initially, the saturating counter is in an unknown state $s$; 
    during Phase~1 of Alg.~\ref{alg:cutoff}, the transition $\MooreTransition(s, T)$ is repeated $n^{*}$ times.
    According to the definition of the \kbit saturating counter, we have that:
    \begin{itemize}
    \item
        a $\taken$ transition keeps the counter in $\strongtaken$;
    \item
        from any $\weaktaken_{i}$ state, it takes $i$ $\taken$ transitions to reach $\strongtaken$;
    \item
        from any $\weaknotaken_{i}$ state, it takes $n-i$ $\taken$ transitions to reach $\strongtaken$;
    \item
        from $\strongnotaken$, it takes $n$ $\taken$ transitions to reach $\strongtaken$, 
    \end{itemize}
    where $n = 2^{k-1}$ and $1 \leq i < n$.
    Thus, with $n^{*} > 2^{k-1} = n$, at the end of Phase~1 the current state of the saturating counter is $\strongtaken$.
    
    In Phase~2, the victim's thread executes its branch: 
    either it performs $v = \taken$ or $v = \notaken$.
    From this state, Phase~3 of Alg.~\ref{alg:cutoff} starts: 
    $\notaken$ is performed until the first correct prediction is obtained.
    
    If the victim's branch has direction $v = \taken$, then $\MooreTransition(\strongtaken, \taken) = \strongtaken$.
    From the $\strongtaken$ state, via $\notaken$ exactly $c = n$ branch executions are needed to reach the state $\strongnotaken$ where the first correct prediction occurs.
    
    If the victim's branch has direction $v = \notaken$, then $\MooreTransition(\strongtaken, \notaken) = \weaktaken_{1}$.
    From the $\weaktaken_{1}$ state, via $\notaken$ exactly $c = n - 1$ branch executions are needed to reach the state $\strongnotaken$ where the first correct prediction occurs.
    
    Since the number $k$ of bits in the saturating counter is known to the attacker, it can trivially derive the victim's branch direction $v$ from the value $c$ obtained from Alg.~\ref{alg:cutoff}: 
    if $c = 2^{k-1}$, then $v = \taken$, otherwise $v = \notaken$.
    Thus, the attack is always successful.
\end{proof}

\begin{definition}
\label{def:pkbit}
    Given $k \in \naturals_{\geq 2}$, let $n = 2^{k-1}$. A \kbit PSC is a 
probabilistic Moore machine $\mc = (\mstates, \minit, \mtransitions, \Outfunc)$, 
$S = \setnocond{\strongtaken, \strongnotaken} \cup \setcond{\weaktaken_{j}, \weaknotaken_{j}}{1 \leq j < n}$, $\mtransitions$ is defined as
\begin{align*}
    &\mtransitions(\strongnotaken, \notaken, \strongnotaken) = 1, &&\mtransitions(\strongnotaken, \taken, \weaknotaken_{1}) = m. \\
    &\mtransitions(\weaknotaken_{i}, \taken, \weaknotaken_{i+1}) = m, &&\mtransitions(\weaknotaken_{i+1}, \notaken, \weaknotaken_{i}) = m, \\
    &\mtransitions(\weaknotaken_{i}, \taken, \weaknotaken_{i}) = 1-m, &&\mtransitions(\weaknotaken_{i}, \notaken, \weaknotaken_{i}) = 1-m, \\
    &\mtransitions(\strongtaken, \taken, \strongtaken) = 1, &&\mtransitions(\strongtaken, \notaken, \weaktaken_{1}) = m, \\
    &\mtransitions(\weaktaken_{i}, \notaken, \weaktaken_{i+1}) = m, &&\mtransitions(\weaktaken_{i+1}, \taken, \weaktaken_{i}) = m, \\
    &\mtransitions(\weaktaken_{i}, \taken, \weaktaken_{i}) = 1-m, &&\mtransitions(\weaktaken_{i}, \notaken, \weaktaken_{i}) = 1-m, \\
    &\mtransitions(\weaktaken_{1}, \taken, \strongtaken) = m, &&\mtransitions(\weaktaken_{n-1}, \notaken, \strongnotaken) = m, \\
\end{align*}
where $m \in [0,1]$ and $1 \leq i < n-1$,  and 
    $\Outfunc$ is defined as $\Outfunc(\strongtaken) = \Outfunc(\weaktaken_{j}) = \taken$ and $\Outfunc(\strongnotaken) = \Outfunc(\weaknotaken_{j}) = \notaken$ for each $1 \leq j < n$.
\end{definition}


            


\begin{definition}\label{ineq:double1}
    For a \kbit PSC $\mc = (\mstates, \MooreInput, \mtransitions, \Outfunc)$, we define the random variable $\mathsf{out}$ as    
    \[
    \attackout = \min\{i \mid s_{i}=\strongnotaken\},
    \]
    where $v \in \{\taken, \notaken\}$, and $\strongtaken=s_{0},v,s_{1}, \notaken,s_{2}, \notaken, \dots, \notaken,s_{i+1}=\strongnotaken$ is a path of $\mc$. 
    A \pnbit prediction counter satisfies $(\epsilon, \delta)$-differential privacy, if for any $c \in \mathbb{N}^*$, we have
    \begin{equation}
  \begin{aligned}
  &\Pr[\attackout[\taken]=c] \leq \mathrm{e}^\epsilon \Pr[\attackout[\notaken]=c] + \delta, \\
  &\Pr[\attackout[\notaken]=c] \leq \mathrm{e}^\epsilon\Pr[\attackout[\taken]=c] + \delta ,
  \end{aligned}    
  \end{equation}
  where $\epsilon, \delta \in \posreals$.
\end{definition}    
For the sake of brevity in the subsequent discussion, we let $ \substituteg = 2^{k-1}-1$ in the proof.
\begin{restatable}{theorem}{attackStrategyUniparapnbit}
\label{thm:attackStrategyUniparapnbit}
    Given a \kbit PSC $\mc$, for any $c \in \naturals$ such that $c \geq 2^{k-1}-1$ and $\mathord{\bowtie} \in \setnocond{\mathord{<}, \mathord{\leq}, \mathord{\geq}, \mathord{>}, \mathord{=}}$,
    we have:
    \[
         \Pr[\attackout[\taken]=c] \bowtie \Pr[\attackout[\notaken]=c] \iff c \bowtie \frac{2^{k-1} -1}{\threshold}.
    \]
\end{restatable}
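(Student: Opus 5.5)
The plan is to compute both output distributions in closed form as negative-binomial probabilities, then compare them directly. I write $g = 2^{k-1}-1$ and assume $0<m<1$ (the endpoints $m=0$ and $m=1$ are degenerate and will be checked separately). After priming, the attack model is a linear chain $\strongtaken \to \weaktaken_1 \to \cdots \to \weaktaken_{g} \to \strongnotaken$. Under input $\notaken$, each step of this chain advances with probability $m$ and otherwise self-loops. By Def.~\ref{ineq:double1}, $\attackout = c$ means that $s_{c+1}$ is the first visit to $\strongnotaken$, i.e. the chain is absorbed exactly at the $c$-th probe. So the analysis only needs to count, for each victim direction, how many successful advances must occur within the $c$ probes, with the last one exactly at probe $c$.

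\emph{Case $v=\taken$.} In Def.~\ref{def:pkbit} we have $\mtransitions(\strongtaken,\taken,\strongtaken)=1$, so the counter is still in $\strongtaken$ when probing starts. It then needs $g+1$ advances, the last at probe $c$, which gives
\[
\Pr[\attackout[\taken]=c]=\binom{c-1}{g} m^{g+1}(1-m)^{c-1-g}.
\]

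\emph{Case $v=\notaken$.} I split on the victim's own step. With probability $m$ the counter moves to $\weaktaken_1$ and then needs $g$ advances; with probability $1-m$ it stays in $\strongtaken$ and needs $g+1$ advances. Hence
\[
\Pr[\attackout[\notaken]=c]=m\binom{c-1}{g-1}m^{g}(1-m)^{c-g}+(1-m)\binom{c-1}{g}m^{g+1}(1-m)^{c-1-g}.
\]

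Next I subtract the two expressions and factor out the common term $m^{g+1}(1-m)^{c-1-g}$, which is positive when $c\ge g+1$. This gives
\[
\Pr[\attackout[\taken]=c]-\Pr[\attackout[\notaken]=c]=m^{g+1}(1-m)^{c-1-g}\Big[m\tbinom{c-1}{g}-(1-m)\tbinom{c-1}{g-1}\Big].
\]
Using $\binom{c-1}{g}=\frac{c-g}{g}\binom{c-1}{g-1}$ and dividing by the positive quantity $\binom{c-1}{g-1}/g$, the bracket has the same sign as $m(c-g)-(1-m)g = mc-g$. Therefore $\Pr[\attackout[\taken]=c]\bowtie\Pr[\attackout[\notaken]=c]$ holds iff $mc \bowtie g$, i.e. iff $c\bowtie g/m$, for every one of the five relations $\bowtie$ simultaneously, since they are all determined by the sign of the difference.

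The main obstacle is the boundary bookkeeping rather than the algebra. At $c=g$ the exponent $c-1-g$ equals $-1$, so the factoring above is invalid. I will therefore handle $c=g$ directly: here $\binom{g-1}{g}=0$, so $\Pr[\attackout[\taken]=g]=0$, while $\Pr[\attackout[\notaken]=g]=m^{g+1}>0$. This gives strict $<$, which agrees with $g<g/m$ for $m<1$. I also need to be careful about the indexing convention for $\attackout$ in Def.~\ref{ineq:double1}, namely whether it counts probes or states; this is exactly why the hypothesis is $c\ge g$ rather than $c\ge g+1$. For the degenerate endpoints, $m=1$ is the deterministic counter, where the claim is a direct check, and $m=0$ is excluded because $g/m$ is undefined there.
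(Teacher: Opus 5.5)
Your argument is correct for $0<m<1$ and uses the same decomposition as the paper's proof. You condition on the state $s_1$ reached by the victim's step, so that $\Pr[\attackout[\notaken]=c]$ is $(1-m)\Pr[\attackout[\taken]=c]$ plus $m$ times a first-passage probability from $\weaktaken_1$, and then compare the two. Where you differ is the counting. You use the genuine first-passage (negative binomial) weights $\binom{c-1}{g}$ and $\binom{c-1}{g-1}$, while the paper writes $\binom{c}{2^{k-1}}$ and $\binom{c}{g}$. With the paper's coefficients the ratio becomes $\frac{c-g}{g+1}$, and the comparison would reduce to $c \bowtie \frac{2^{k-1}}{m}-1$ rather than the stated threshold. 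So your version is the one that actually justifies $c \bowtie \frac{2^{k-1}-1}{m}$. Your separate treatment of $c=g$, where $\Pr[\attackout[\taken]=g]=0<m^{g+1}=\Pr[\attackout[\notaken]=g]$, also covers a boundary the paper passes over.

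One claim has to be withdrawn: $m=1$ is not ``a direct check'', because the statement is false there. At $m=1$ the counter is deterministic, with $\attackout[\taken]=g+1$ and $\attackout[\notaken]=g$ surely. At $c=g$ the left-hand side gives $0<1$, but $c=g/m$. For $c\geq g+2$ both probabilities are $0$, but $c>g/m$. Def.~\ref{def:pkbit} allows $m\in[0,1]$, yet the theorem only holds for $0<m<1$. That is exactly the range where your common factor $m^{g+1}(1-m)^{c-1-g}$ is positive for $c\geq g+1$. State that restriction explicitly instead of claiming the endpoint.
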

\begin{proof}

        For $\attackout[\taken] = c$, the corresponding path is $\mpath = s_{0}, \taken,s_{1}, \notaken,s_{2}, \notaken, \dots, \notaken$, $s_{c+1}$,
        where $s_{0} = \strongtaken, s_{c+1}=\strongnotaken$.
        From the definition of $\mtransitions$, there are only one probabilistic transition of it when accepting $\taken$, i.e., $\mtransitions(ST, \taken, ST) = 1$.
        so $s_{1} = \strongtaken$. 
        For $\forall i \in \setcond{l \in \naturals_{\geq1}}{ k \leq c}$, we have $\mtransitions(s_{i}, \notaken, s_{i+1}) = \threshold, \mtransitions(s_{i}, \notaken, s_{i}) = 1-\threshold$.
        Let $ \ranvare{1}, \dots, \ranvare{c}$ be the independent random variables, where
        \begin{align*}
            \ranvare{i} &= 
            \begin{cases}
                1, & s_{i} = s_{i+1} \\
                0, & \text{otherwise} \\
            \end{cases}
        \end{align*}

        From the definition of \kbit PSC, we know that $c \geq 2^{k-1}$, 
        and any path involving $\strongtaken$ and $\strongnotaken$ 
        contains $\substituteg$ distinct weak states taken, when $\attackout[T] = c,   
        \ranvars{c} = \sum_{i=1}^{c} \ranvare{i} = 2^{k-1}$. Then it follows that
        \[
            \Pr[\attackout[\taken]=c] = \Pr[\ranvars{c} = 2^{k-1}] = \binom{c}{2^{k-1}} \threshold^{2^{k-1}} (1-\threshold)^{c-2^{k-1}}.     
        \]

        For $\attackout[\notaken] = c$, the corresponding path is  $\mpath = s_{0}, \notaken,s_{1}, \notaken,s_{2}, \notaken, \dots, \notaken$, $s_{c+1}$,
        where $s_{0} = \strongtaken, s_{c+1}=\strongnotaken$.
        From the definition of $\mtransitions$, there are two probabilistic transitions of it when accepting $\notaken$, i.e., $\mtransitions(ST, \notaken, ST) = 1-\threshold$ and $ \mtransitions(ST, \notaken, \weaktaken_{1}) = \threshold$. 
        For $\forall i \in \setcond{k \in \naturals_{\geq 0}}{ k \leq c}$, we also have $\mtransitions(s_{i}, \notaken, s_{i+1}) = \threshold, \mtransitions(s_{i}, \notaken, s_{i}) = 1-\threshold$.
        
        \begin{itemize}
            \item If $s_{1} = \strongtaken$, similar to the case where $\attackout[\taken] = c$, we have
                \begin{align*}
                    \Pr[\attackout[\notaken]=c, s_{1} = \strongtaken] 
                     & = \mtransitions(\strongtaken, \notaken, \strongtaken) \cdot \Pr[\attackout[\taken]=c] \\
                     & = (1-\threshold) \cdot \binom{c}{2^{k-1}} \threshold^{2^{k-1}} (1-\threshold)^{c-2^{k-1}} \\
                \end{align*}
            \item If $s_{1} = \weaknotaken_{1}$, then for $\mpath = \strongtaken, \notaken, \weaktaken_{1}, \notaken,s_{2}, \notaken, \dots, \notaken,s_{c+1}$, its fragment $\mpath[2:]$ is a path starting from $\weaktaken_{1}$ and ending at $s_{c+1}$. 
            Let $ \ranvare{1}, \dots, \ranvare{c}$ be independent random variables,
            and since any path from $\weaktaken_{1}$ to $\strongnotaken$ contains $\substituteg$ distinct weak states,
            when $\attackout[\notaken] = c,   
            \ranvars{c} = \sum_{i=1}^{c} \ranvare{i} = \substituteg$, then it follows that
            \begin{align*}
                \Pr[\attackout[\notaken]=c, s_{1} = \weaktaken_{1}] 
                & = \mtransitions(\strongtaken, \notaken, \weaktaken_{1}) \cdot 
                    \Pr[\ranvars{c} = \substituteg]\\
                & = \threshold \cdot \binom{c}{\substituteg} \threshold^{\substituteg} (1-\threshold)^{c - \substituteg}
            \end{align*}
        \end{itemize}
        In summary, we have
        \[
            \Pr[\attackout[\notaken]=c] = \Pr[\attackout[\notaken]=c, s_{1} = \strongtaken] + \Pr[\attackout[\notaken]=c, s_{1} = \weaktaken_{1}]
        \]
        so $\Pr[\attackout[\taken]=c] \bowtie \Pr[\attackout[\notaken]=c]$ is equivalent to 
        \[
            \binom{c}{2^{k-1}} \threshold^{2^{k-1}} (1-\threshold)^{c-2^{k-1}} \bowtie \binom{c}{\substituteg} \threshold^{2^{k-1}-1} (1-\threshold)^{c - 2^{k-1}+1}
        \]
        Simplifying, we get:
        \[
            c \bowtie \frac{2^{k-1} -1}{\threshold}
            \qedhere
        \]
\end{proof}


\begin{corollary}
    Given a \kbit PSC $\mc$, for all $\epsilon >0$, $\mc$ does not satisfy $(\epsilon,0)$-differential privacy. 
\end{corollary}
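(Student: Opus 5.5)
The plan is to exhibit a single output value $c$ at which the two conditional probabilities are wildly unbalanced: one of them is zero and the other is strictly positive. No finite $\epsilon$ can absorb that gap when $\delta = 0$. The natural witness is the smallest possible probe count in the $\notaken$ scenario, $c = \substituteg = 2^{k-1}-1$. This is the $k$-bit analogue of the $c=1$ observation made for the 2-bit PSC in Sect.~\ref{ssect:Optimal}.

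First, I will show that $\Pr[\attackout[\taken]=\substituteg] = 0$. When $v=\taken$, the definition of $\mtransitions$ gives $\mtransitions(\strongtaken,\taken,\strongtaken)=1$, so $s_1=\strongtaken$. In the $k$-bit chain every $\notaken$-transition either stays put or advances by exactly one position along $\strongtaken,\weaktaken_1,\dots,\weaktaken_{n-1},\strongnotaken$. Reaching $\strongnotaken$ from $\strongtaken$ therefore needs at least $n = 2^{k-1}$ advancing steps. Hence the event $\attackout[\taken]=c$ is empty for $c<2^{k-1}$, and in particular for $c=\substituteg$. This agrees with the binomial expression $\binom{c}{2^{k-1}}\threshold^{2^{k-1}}(1-\threshold)^{c-2^{k-1}}$ from the proof of Thm.~\ref{thm:attackStrategyUniparapnbit}, which vanishes for $c<2^{k-1}$.

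Next, I will show that $\Pr[\attackout[\notaken]=\substituteg]>0$ whenever $\threshold>0$. Consider the path in which the victim's $\notaken$ moves $\strongtaken$ to $\weaktaken_1$ (probability $\threshold$), and each of the following $\substituteg$ probes advances one step to $\strongnotaken$ (probability $\threshold$ each). Its probability is $\threshold\cdot\threshold^{\substituteg}=\threshold^{2^{k-1}}>0$. This is exactly the second summand in the decomposition of $\Pr[\attackout[\notaken]=c]$ in the proof of Thm.~\ref{thm:attackStrategyUniparapnbit}, evaluated at $c=\substituteg$.

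Finally, the second inequality of Def.~\ref{ineq:double1} with $\delta=0$ would require $\threshold^{2^{k-1}}\le \mathrm{e}^\epsilon\cdot 0=0$, a contradiction for every $\epsilon>0$. So $\mc$ is not $(\epsilon,0)$-DP. The same witness shows that $(\epsilon,\delta)$-DP fails for every $\delta<\threshold^{2^{k-1}}$.

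The main obstacle is bookkeeping rather than any real difficulty. The indexing of $\attackout$ (number of probes versus number of steps) has to match the paper's convention, so that the witness value is exactly $\substituteg$. The degenerate case $\threshold=0$ must also be handled: there both distributions are concentrated at no finite $c$, so the statement should be read with $\threshold\in(0,1]$, matching the paper's implicit assumption $m\neq 0$.
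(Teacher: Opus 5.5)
Your proof is correct and follows the paper's argument: both take the witness $c = 2^{k-1}-1$, which cannot occur when $v=\taken$ but is reached by the unique shortest path when $v=\notaken$. Your version is more precise than the paper's, which claims $\Pr[\attackout[\notaken]=c]=1$ where the correct value is $\threshold^{2^{k-1}}$; this is positive exactly when $\threshold>0$, and you are right to exclude $\threshold=0$, where the statement actually fails.
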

\begin{proof}
    From the definition of \kbit PSC, we know that there is an unique
    path $\mpath$ of $\mc$ from $\strongtaken$ to $\strongnotaken$: $\mpath = \strongtaken, \notaken, \weaktaken_{1}, \dots, \weaktaken_{1}, \notaken, \strongnotaken$.
    other paths are longer than it.
    When $\attackout[\evaluation] = c = \substituteg$, the $\evaluation$ must be $\notaken$, 
    otherwise $c \geq 2^{k-1}$. This means $\Pr[\attackout[\taken]=c] = 0$ and $\Pr[\attackout[\notaken]=c] = 1$, 
    which implies $(\epsilon,0)$-differential privacy is not satisfied.
\end{proof}

\begin{definition}
\label{def:ppnbit}
Given $k \in \naturals_{\geq 2}$, let $n = 2^{k-1}$. The enhanced \kbit PSC is a 
probabilistic Moore machine $\mc = (\mstates, \minit, \mtransitions, \Outfunc)$, 
where $S = \{\strongtaken, \strongnotaken\} \cup \setcond{\weaktaken_{i}, \weaknotaken_{i}}{i \in [1,2^{k-1}-1], k\in \mathbb{N}}$, $\mtransitions$ is defined as
\begin{align*}
    &\mtransitions(\safepsnotaken, \taken, \safepsnotaken)=1-m + mp,&&\mtransitions(\safepsnotaken, \taken, \weaknotaken_{1})=m(1-p), \\
    &\mtransitions(\safepsnotaken, \notaken, \safepsnotaken)=1-m p,&&\mtransitions(\safepsnotaken, \notaken, \weaknotaken_{1})=mp, \\
    &\mtransitions(\weaknotaken_{i}, \taken, \weaknotaken_{i+1})=m,&&\mtransitions(\weaknotaken_{i+1}, \notaken, \weaknotaken_{i})=m, \\
    &\mtransitions(\weaknotaken_{i}, \taken, \weaknotaken_{i})=1-m,&&\mtransitions(\weaknotaken_{i}, \notaken, \weaknotaken_{i})=1-m, \\
    &\mtransitions(\safepstaken, \taken, \safepstaken)=1-mp,&&\mtransitions(\safepstaken, \taken, \weaktaken_{1})=mp, \\
    &\mtransitions(\safepstaken, \notaken, \safepstaken)=1-m + mp,&&\mtransitions(\safepstaken, \notaken, \weaktaken_{1})=m(1-p), \\
    &\mtransitions(\weaktaken_{i}, \notaken, \weaktaken_{i+1})=m,&&\mtransitions(\weaktaken_{i+1}, \taken, \weaktaken_{i})=m, \\
    &\mtransitions(\weaktaken_{i}, \taken, \weaktaken_{i})=1-m,&&\mtransitions(\weaktaken_{i}, \notaken, \weaktaken_{i})=1-m, \\
    &\mtransitions(\weaktaken_{1}, \taken, \strongtaken)=m,&&\mtransitions(\weaktaken_{n-1}, \notaken, \strongnotaken)=m, \\
\end{align*}
where $m,p \in (0,1), i \in \naturals^* $ and $1 \leq i < n-1$,
 and $\Outfunc$ is defined as $\Outfunc(\strongtaken) = \Outfunc(\weaktaken_{j}) = \taken$ and $\Outfunc(\strongnotaken) = \Outfunc(\weaknotaken_{j}) = \notaken$ for each $j \in [1, n-1]$.
\end{definition}

In the following, we will present the proof of Thm.~\ref{thm:pTwoBitDifferentialPrivacyResult}. When $k = 2$, this also serves as the proof of Theorem ~\ref{thm:pTwoBitDifferentialPrivacyResult}.
\pTwoBitDifferentialPrivacyResult*
\begin{proof}
    For $\attackout[\notaken] = c$, we have the path $\mpath = s_{0}, \notaken, s_{1}, \notaken, s_{2}, \notaken, \dots, \notaken, s_{c+1}$,
    where $s_{0} = \safepstaken$ and $s_{c+1} = \safepsnotaken$. 

    \begin{itemize}
        \item If $s_{1} = \safepstaken$, the fragment $\mpath[2:]$ of $\mpath$ is  $\safepstaken, \notaken, s_{2} , \notaken,s_3, \notaken, \dots, \notaken,s_{c+1}$.
        we can view the fragment $\mpath[2:]$ as a path corresponding to $\attackout[\notaken] = c - 1$ 
        with $\mc$, and the probability of this condition is:
        \begin{align*}
            \Pr[\attackout[\notaken]=c, s_{1} = \safepstaken]
              = {} &\mtransitions(\safepstaken, \notaken, \safepstaken) \cdot \Pr[\attackout[\notaken]=c-1]\\
             {} = {} &(1-\threshold+\threshold\privacy) \cdot \Pr[\attackout[\notaken]=c-1]
        \end{align*}
        
        \item If $s_{1} = \weaktaken_{1}$, then similarly to the proof in \ref{thm:attackStrategyUniparapnbit}, we have
        \[
            \Pr[\attackout[\notaken]=c, s_{1} = \weaktaken_{1}]  = \threshold(1-\privacy) \cdot \binom{c}{\substituteg} \threshold^{\substituteg} (1-\threshold)^{c - \substituteg}
        \]
    \end{itemize}
    In summary, we have such a recurrence relation:
    \begin{align*}
        & \hphantom{{}={}} \Pr[\attackout[\notaken]=c] \\
        & {}={} \Pr[\attackout[\notaken]=c, s_{1} = \safepstaken] + \Pr[\attackout[\notaken]=c, s_{1} = \weaktaken_{1}] \\
        & {}={} (1-\threshold+\threshold\privacy) \cdot \Pr[\attackout[\notaken]=c-1] + \threshold(1-\privacy) \cdot \binom{c}{\substituteg} \threshold^{\substituteg} (1-\threshold)^{c - \substituteg}
    \end{align*}
    
    For $\attackout[\taken] = c$, the associated path is $\mpath = s_{0}, \taken, s_{1}, \notaken,s_{2}, \notaken, \dots,$ $\notaken, s_{c+1}$,
    where $s_{0} = \safepsnotaken$ and $s_{c+1} = \safepstaken$.
    \begin{itemize}
        \item If $s_{1} = \safepsnotaken$, the fragment $\mpath[2:]$ of $\mpath$ is  $\safepsnotaken, \taken, s_{2} , \notaken,s_3, \notaken, \dots, \notaken,s_{c+1}$.
        we can also view this fragment as a path corresponding to $\attackout[\taken] = c - 1$ 
        of $\mc$, and the probability of this condition is:
        \begin{align*}
            \Pr[\attackout[\taken]=c, s_{1} = \safepsnotaken]
             & = \mtransitions(\safepstaken, \taken, \safepstaken) \cdot \Pr[\attackout[\notaken]=c-1]\\
             & = (1-\threshold\privacy) \cdot \Pr[\attackout[\notaken]=c-1]
        \end{align*}
        
        \item If $s_{1} = \weaknotaken_{1}$, then similarly to condition $\attackout[\notaken] = c$, we have
        \[
            \Pr[\attackout[\taken]=c, s_{1} = \weaknotaken_{1}]  = \threshold\privacy \cdot \binom{c}{\substituteg} \threshold^{\substituteg} (1-\threshold)^{c - \substituteg}
        \]
    \end{itemize}
    In summary, we also have a such recurrence relation:
    \begin{align*}
        \Pr[\attackout[\taken]=c]
        & = \Pr[\attackout[\taken]=c, s_{1} = \safepsnotaken] + \Pr[\attackout[\taken]=c, s_{1} = \weaknotaken_{1}] \\
        & = (1-\threshold\privacy) \cdot \Pr[\attackout[\notaken]=c-1] + \threshold\privacy \cdot \binom{c}{\substituteg} \threshold^{\substituteg} (1-\threshold)^{c - \substituteg}
    \end{align*}
    Let we denote $a_c = \binom{c}{\substituteg} \threshold^{\substituteg} (1-\threshold)^{c - \substituteg}$,
    $p_c = \Pr[\attackout[\notaken]=c]$ and $q_c = \Pr[\attackout[\taken]=c]$, then we have:
    \begin{align*}
        p_c & = \threshold(1-\privacy) \cdot a_c + (1-\threshold+\threshold\privacy) \cdot p_{c-1} \\
        q_c & = \threshold\privacy \cdot a_c + (1-\threshold\privacy) \cdot p_{c-1}
    \end{align*}
    
    From the definition \ref{ineq:double1}, if $\privacy \geq \frac{1}{2}$, we can see that $\mc$ satisfies $(\ln\frac{\privacy}{1- \privacy},0)$-differential privacy.
 Furthermore, if $\privacy < \frac{1}{2}$, we can also prove that $\mc$ satisfies $(\ln\frac{1-\privacy}{\privacy},0)$-differential privacy.
\end{proof}

\begin{definition}
    we have the following definition for the \privacypnbit prediction counter $\mc$:
    \begin{itemize}
        \item A random variable $\execute \colon \mstates \to \setnocond{\taken, \notaken}$, 
        $\forall s \in  \mstates$, it follows that 
        \[
            \Pr[\execute(s) = \taken] + \Pr[\execute(s) = \notaken] = 1
        \]
        \item If we sort the states in $\mstates$ as
        $S' = (\safepstaken, \weaktaken_{1} \dots, \weaktaken_{\substituteg}, \weaknotaken_{\substituteg}, \dots \weaknotaken_{1}, \safepsnotaken)$, 
        the \emph{transition matrix} of $\mc$, denoted as $\tranmatrix$, is a square matrix of order $\abs{\mstates}$,
        where $\tranmatrix_{i,j} = \sum_{v \in \setnocond{\taken, \notaken}}\mtransitions(S'[i],v,S'[j]) \cdot \Pr[\execute(S'[i]) = \evaluation]$ and $S'[i],S'[j]$ are the entries $i$-th and $j$-th in $S'$.
        \item The \emph{stationary probability} of state $s \in \mstates$ is denoted by
 $\pstationary_{s} = \sum_{i\in S} \pstationary_{i} \cdot \tranmatrix_{i,s}$.    
        \item The \emph{misprediction rate} of is denoted as $\sum_{s \in \mstates} (1-\Pr[\execute(s) = \Outfunc(s)]) * \pstationary_{s}$.
    \end{itemize}
    
\end{definition}

\mispredictionRateIndependenceOfThreshold*
\begin{proof}
    Firstly, we denote the $s = 1-t,q = 1-p,h = (qt + ps), l = (qs + pt)$ and $n = 1-m$, then we have the transition matrix $\tranmatrix$ of $\mc$ can be written as:
\[
       \tranmatrix= \begin{bmatrix}
            mh + n  & ml & 0      & 0        & \dots  & 0      & 0      & 0      & \dots  &  0        \\
            mt              & n          & ms     & 0        & \dots  & 0      & 0      & 0      & \dots  &  0         \\
            0               & mt         & n      & ms      & \dots  & 0      & 0      & 0      & \dots  &  0        \\
            \vdots          & \vdots     & \vdots & \vdots   & \ddots & \vdots & \vdots & \vdots & \vdots &\vdots  \\
            0               & 0          & 0      & 0        & \dots  & mt     & n      & 0      & \dots  & ms     \\
            mt              & 0          & 0      & 0        & \dots  & 0      & 0      & n      & \dots  & 0  \\
            0               & 0          & 0      & 0        & \dots  & 0      & 0      & mt     & \dots  & 0  \\
            \vdots          & \vdots     & \vdots & \vdots   & \ddots & \vdots & \vdots & \vdots & \vdots &\vdots\\
            0               & 0          & 0      & 0        & \dots  & 0      & 0      & mt     & n      & ms    \\
            0               & 0          & 0      & 0        & \dots  & 0      & 0      & 0      & mh  & ml + n\\
      \end{bmatrix}
\]
For the given matrix $M$, observations indicate that $n$ is always located on the main diagonal of the matrix, 
while the other off-diagonal elements are integer multiples of $m$. Based on this characteristic, the matrix $M$ can be expressed 
as the sum of two matrices, that is,
$M = mB + nI$,
where $B$ is a matrix independent of $m$, and $I$ denotes the identity matrix. 
Within this framework, for a probabilistic automaton with a finite state space and $M$ as its transition matrix, 
given that all states within the chain are mutually accessible, the chain exhibits irreducibility. Furthermore, due to the non-periodicity 
of the $\mc$, it ensures the existence of a unique stationary distribution $\pstationary$, and starting from any initial probability distribution, 
the system will ultimately converge to this stationary distribution over time.

Consequently, the linear equation for the stationary distribution can be formulated as $\pstationary = \pstationary(mB + nI)$, 
which has a unique solution $\pstationary$. By setting $n = 1 - m$ and substituting it into the aforementioned equation, 
we can derive the conclusion that $\pstationary(B - I) = 0$. It should be noted that since the matrix $B$ 
and the identity matrix $I$ are both independent of the parameter $m$, this implies that the stationary distribution 
$\pstationary$ is also independent of the value of $m$.
Thus, we can conclude that the misprediction rate of $\mc$ is independent of the threshold parameter $\threshold$.
\end{proof}
\end{document}